%% file: main.tex
\documentclass{article}
\usepackage[a4paper, total={16cm, 21cm}]{geometry}
\usepackage[utf8]{inputenc}
\usepackage[english]{babel}
\usepackage{csquotes}
\usepackage[title]{appendix}
\usepackage{subfiles, hyperref, parskip, stackengine, enumitem, diagbox}
\usepackage{amsthm, amsmath, amsfonts, bm, amssymb}
\usepackage{graphicx, tikz}
\usepackage[dvipsnames]{xcolor}
\usepackage[position=bottom]{subfig}
\usepackage{array, longtable, multirow, rotating, subcaption}
\usepackage{epsfig, svg}
\usepackage{packages/quantum, packages/symbols}

\theoremstyle{plain}
\newtheorem{theorem}{Theorem}
\newtheorem{lemma}[theorem]{Lemma}

\theoremstyle{definition}
\newtheorem{definition}[theorem]{Definition}

\newcommand\wa{\raisebox{.1ex}{$\vartriangleright$}}
\newcommand\bd{\raisebox{-.125ex}{$\blacksquare$}}
\newcommand\ba{\raisebox{.1ex}{$\blacktriangleright$}}

\makeatletter
\def\thm@space@setup{
  \thm@preskip=10pt plus 2pt minus 2pt
  \thm@postskip=10pt plus 2pt minus 2pt
}
\makeatother

\usepackage[style=phys, biblabel=brackets, sorting=none]{biblatex}
\bibliography{refs.bib}

\usepackage[blocks]{authblk}

\title{Instability of the undecidable behavior of the spectral gap in 1D}
\usepackage{orcidlink}
\author[1,2]{Laura Castilla-Castellano\,\orcidlink{0009-0007-6934-0758} \thanks{{\tt laurca04@ucm.es}}}
\author[3]{Angelo Lucia\,\orcidlink{0000-0003-1709-1220}\thanks{{\tt angelo.lucia@polimi.it}}}
   
\affil[1]{
   Departamento de Análisis Matemático y Matemática Aplicada, 
   Universidad Complutense de Madrid, Facultad de Ciencias Matemáticas, 28040 Madrid, Spain
}

\affil[2]{
   Instituto de Ciencias Matemáticas, 28049 Madrid, Spain
}

\affil[3]{
    Dipartimento di Matematica, Politecnico di Milano, 20133 Milano, Italy
}

\date{\today}

\begin{document}

\maketitle

\begin{abstract}
    The problem of determining the existence of a spectral gap in a lattice quantum spin system was previously shown to be undecidable for one \cite{Bausch_2020} or more dimensions \cite{Cubitt2015, CPW22}. In this work, we focus on the 1-dimensional result, showing that the constructed family with undecidable behavior is extremely sensitive to perturbations. In particular, for any $\varepsilon > 0$, there exists a 1-local, rank 1, perturbation with norm $O(\varepsilon)$, such that the spectral gap problem for the family in \cite{Bausch_2020} now becomes decidable.
\end{abstract}

\tableofcontents

\subfile{1_Introduction}
\subfile{2_Previous}
\subfile{3_Perturbed}
\subfile{4_Discussion}

\section*{Acknowledgements}
We thank David Pérez-García, for insightful discussions and valuable feedback.

L.C. acknowledges financial support from grants PID2020-113523GB-I00, PID2023-146758NB-I00 and CEX2023-001347-S, funded by MICIU/AEI/10.13039/501100011033, as well as grants PRE2021-098747, funded by MICIU/AEI/10.13039/501100011033 and “ESF+”, and TEC-2024/COM-84-QUITEMAD-CM, funded by Comunidad de Madrid.

A.L. acknowledges support from the Italian Ministry of University and Research (MUR), through ``Programma per Giovani Ricercatori Rita Levi Montalcini'', the grant ``Dipartimento di Eccellenza 2023-2027'' of Dipartimento di Matematica, Politecnico di Milano, as well as the National Group of Mathematical Physics (GNFM) of the Italian Institute for High Mathematics (INdAM).

\printbibliography

\clearpage
\appendix
\subfile{A_Appendix}

\end{document}

%% file: 1_Introduction.tex
\section{Introduction}\label{sec:introduction}

For quantum many-body systems, the spectral gap is arguably one of the most physically significant properties. Whether a system is gapped or gapless has far-reaching consequences: it governs the nature of quantum phase transitions \cite{Sondhi_1997}, controls the decay of correlations and the structure of entanglement in ground states \cite{Hastings_2006, Hastings_2007_entanglement}, determines the efficiency of classical and quantum simulation algorithms \cite{arad2013arealawsubexponentialalgorithm}, and underlies the classification of topological phases of matter \cite{Wen_2017}. Nonetheless, despite its central importance, the question of whether a given local Hamiltonian is gapped or gapless was first proven to be undecidable for quantum spin systems in two or more dimensions \cite{CPW22, Cubitt2015}.

A follow-up question was to study the case in one dimensional systems, often regarded as the most tractable class of quantum many-body systems. Ground states of gapped 1-dimensional systems satisfy an entanglement area law \cite{Hastings_2007, arad2013arealawsubexponentialalgorithm}, and can therefore be efficiently approximated by matrix product states. The density matrix renormalization group (DMRG) algorithm performs extremely well in 1-dimensional \cite{PhysRevLett.69.2863}, and there exist polynomial-time algorithms for computing ground state properties of all gapped 1-dimensional Hamiltonians \cite{landau2015polynomial}. Furthermore, 1-dimensional quantum systems cannot exhibit thermal phase transitions \cite{PhysRevB.10.2900} or intrinsic topological order \cite{PhysRevLett.94.140601}, and for the simplest class of such systems (frustration-free nearest-neighbor qubit chains) the spectral gap problem has been completely solved \cite{10.1063/1.4922508}. Even classical analogues such as tiling and satisfiability problems become tractable in 1D, in contrast to the NP-hardness and undecidability that arise in two or more dimensions \cite{berger1966undecidability, 10.1145/800157.805047}. However, despite its apparent simplification, the question was again proven to remain undecidable even in the 1-dimensional setting \cite{Bausch_2020}.

Nevertheless, the fine-tuning of the construction raises a natural and physically important question: how robust are these undecidability results to perturbations of the local interaction terms? Stability is a foundational concept in physics, as the usefulness of theoretical descriptions of a physical system rest on the premise that its predictions do not collapse under the small imperfections of real settings: noise, manufacturing tolerances, finite measurement precision, and the truncation of infinitely many degrees of freedom to a tractable model.

Whether undecidable behavior in quantum spin systems persist or simplifies under perturbations of the local interactions is then a natural question. The authors of \cite{Bausch_2020} raise it explicitly in their discussion, and make two key observations. On the one hand, they note that perturbations of the Feynman-Kitaev history state terms encoding the computation are not robust \cite{BC18}: such perturbations tend to disrupt the computational encoding, localizing the ground state, and predictably destroying the desired undecidable behavior (which does not mean, however, that it immediately becomes decidable). On the other hand, they argue that small perturbations of the classical diagonal coupling terms leave the construction intact, since these only shift the relevant penalty and bonus terms by an amount proportional to the perturbation strength, too small to alter the thermodynamic behavior.

In this work, we show that the undecidable family of \cite{Bausch_2020} is in fact extremely fragile with respect to a class of perturbations. Even if the perturbation we study here is also classical and diagonal, it is an additional 1-local term, which disrupts the fine-tuning energy coupling in the construction, breaking the relationship between the spectral gap and the undecidable problem encoded in the interactions. Specifically, we prove that for any $\varepsilon > 0$, adding this 1-local, rank 1 interaction of norm $O(\varepsilon)$, which is also classical (i.e., diagonal in the computational basis), is sufficient to render the spectral gap problem for the resulting models decidable. That is, the spectral gap of the perturbed family can be determined by a finite computation whose complexity depends only on $\varepsilon$ and the system constants.

The paper is organized as follows. The remainder of Section \ref{sec:introduction} reviews the necessary computational and quantum many-body background, as well as the structure of the previous results \cite{Bausch_2020, Cubitt2015, CPW22}. Section \ref{sec:1d_undecidable} provides a self-contained analysis of how the 1-dimensional family of Hamiltonians with undecidable behavior is constructed \cite{Bausch_2020}. Section \ref{sec:1d_perturbed} introduces the perturbation, analyzes its effect on the previously defined family with undecidable behavior, and proves decidability of the spectral gap for the perturbed family.

\subsection{Computational notions}

We recall the standard definitions of undecidability and the Halting problem.

\begin{definition}\label{def:undecidable_problem}
    An \textit{undecidable problem} is a decision problem (one with a yes/no output) that has been proven to have no general algorithm for its resolution.
\end{definition}

\begin{definition}\label{def:halting_problem}
    Given an arbitrary pair $(P, I)$ describing a program and an input, the \textit{Halting problem} is the decision problem of determining if a program $P$ on input $I$ will finish in a finite amount of steps or will continue to run forever.
\end{definition}

In 1936, Alan Turing showed that the Halting problem is an undecidable problem \cite{turing1936computable}. One common strategy to prove that other problems are also undecidable is to encode the Halting problem in them. This encoding is usually done by representing the programs $P$ as Turing Machines (TM), and in particular by the use of a so called Universal Turing Machine (UTM), namely a Turing Machine can replicate the behavior of any arbitrary TM by reading both instructions $P$ and input $I$ (with the pair usually being encoded in a single integer $n \in \mathbb{N}$). One then needs to connect the desired problem with the halting behavior of this UTM: this is precisely what it is done in for the spectral gap undecidability results \cite{Bausch_2020,Cubitt2015, CPW22}

For the purposes of this work, we take the same construction around Turing machines that appears in \cite{Bausch_2020}, without modification, but for the sake of the reader, we recall in Appendix \ref{sec:turing-machines} the definitions of both a classical and a quantum Turing Machine, as taken from \cite{BV93} (Section 3), and their universal versions.

\subsection{Quantum systems framework}
We will be using the standard framework for describing finite quantum spin chains. In a chain of $N$ particles, we associate to each site $i$ a Hilbert space $\mathcal{H}^{(i)} \simeq \mathbb{C}^d$, and Hamiltonians will be given by translation invariant 1-body and 2-body interactions as follows.

\begin{definition}\label{def:1d_hamiltonian_definition}
Given a on-site interaction $h^{(1)} \in \mathcal{B}(\mathcal{H})$ and a nearest-neighbor interaction $h^{(2)} \in \mathcal{B}(\mathcal{H} \otimes \mathcal{H})$, we define a 1-dimensional, translationally invariant, nearest-neighbor Hamiltonian on a chain of $N$ particles as 
\begin{equation}
        H_N =  \sum_{i=1}^N h^{(1)}_i + \sum_{i=1}^{N-1} h^{(2)}_{i,i+1},
\end{equation}
where $h^{(1)}_i \in \mathcal{B}(\mathcal{H}^{(i})$ are translations of $h^{(1)}$ while $h^{(2)}_{i,j}  \in \mathcal{B}(\mathcal{H}^{(i)} \otimes \mathcal{H}^{(i+1)})$ are translations of $h^{(2)}$.
\end{definition}

\begin{definition}\label{def:gs_energy_definitions}
Consider a family of $1$-dimensional Hamiltonians $\{H_N: N \in \mathbb{N}\}$ indexed by the size $N \in \mathbb{N}$ of the chain. We have the following definitions:
    \begin{enumerate}
        \item The \textit{spectral gap} of $H_N$ is $\Delta(H_N):= \lambda_1(H_N)-\lambda_0(H_N)$.
        \item The family $\{H_N: N \in \mathbb{N}\}$ is \textit{gapped}, if there is a constant $\gamma>0$ and a system size $N_0$ such that for all $N>N_0$, $\lambda_0(H_N)$ is non-degenerate and $\Delta(H_N) \geq \gamma$. In this case, we say that the spectral gap is at least $\gamma$.
        \item The family $\{H_N: N \in \mathbb{N}\}$ is \textit{gapless}, if there is a constant $c>0$ such that for all $\varepsilon>0$ there is an $N_0\in\mathbb{N}$ so that for all $N>N_0$, any point in $[\lambda_0(H_N),\lambda_0(H_N)+c]$ is within distance $\varepsilon$ from $\spec(H_N)$.
        \item The \textit{spectral gap decision problem} asks whether, given a description of the interactions $h^{(1)}$ and $h^{(2)}$, the corresponding family of 1-dimensional Hamiltonians $\{H_N: N \in \mathbb{N}\}$ is gapped or gapless, under the promise that exactly one of the two cases holds.
    \end{enumerate}
\end{definition}

\subsection{Overview of spectral gap undecidability results}\label{subsec:original_overview}
Both the 2-dimensional \cite{CPW22} and 1-dimensional \cite{Bausch_2020} result share the same structure:

\begin{enumerate}
    \item Encode the evolution of a UTM in the ground state of a local Hamiltonian. For details about encoding computations in ground states of Hamiltonians, see the work by Gottesman and Irani \cite{GI10}.
    
    \item Create a ground state energy difference between halting and non-halting instances. This can be done by either penalizing the appearance of the halting state, so that the ground state energy is 0 if the halting state is not achieved and positive otherwise (as it is the case in \cite{CPW22,Cubitt2015}), or vice-versa by penalizing the non-halting behavior, so that the ground state energy is 0 in the non-halting instances and positive otherwise (as it is the case in \cite{Bausch_2020}). In either case, this energy difference can only be made of order $\Theta(1/T^2)$, where $T$ is the number of steps taken by the UTM machine. In both cases, the same issue exists: the energy difference between halting and non-halting instances will vanish on larger system sizes. Therefore, this poses the problem of finding ways to amplify this energy difference.
    
    \item Amplify the energy difference. In both constructions, the energy penalty from the previous step is amplified by allowing for multiple instances of the Turing Machine computation to happen in parallel. In the 2-dimensional result, this can be achieved by exploiting the aperiodic structure of the 2-dimensional Robinson tiling \cite{robinson1971undecidability}. However, for the 1-dimensional case, this approach does not work directly, as there are no aperiodic tilings in 1D. The solution to this problem introduced in \cite{Bausch_2020} is to create a ``Marker Hamiltonian'',  which induces partitions in the chain, forming consecutive segments separated by a boundary state $\ket{\bd}$, and whose length and period are related to the halting behavior. In case of halting instances, the energy landscape favors the appearance of multiple copies of finite segments of the optimal length (long enough to see the halting behavior, but not longer than that in order to avoid reducing the energy penalty), thus producing the desired amplification of the $\Theta(1/T^2)$ energy penalty.  In case of non-halting instances, the optimal segment length is unbounded, resulting in a unique segment covering the full chain. 

    \item Coupling with gapped and gapless instances. By designing a coupling between the Hamiltonian encoding the UTM computation (with the amplification  of the energy penalty) with fixed gapped and gapless Hamiltonians, once can control the behavior of the spectral gap in terms of the halting behavior of the encoded problem, thus making the spectral gap uncomputable for this family of models.

\end{enumerate}

We summarize the main result from \cite{Bausch_2020} as follows:
\begin{theorem}[\cite{Bausch_2020}, Theorems 1 and 25]\label{th:undecidable}
Fix an Universal Turing Machine (UTM). There exist (explicitly constructible) local interactions $h^{(1)}(\eta)$ and nearest-neighbor interactions $h^{(2)}(\eta)$, parametrized by an integer $\eta$, such that the family of Hamiltonians $\{H_N(\eta)\}_N$ defined on a spin chain with $N$ sites by \[H^u_N(\eta) = \sum_{i=1}^Nh_i^{(1)}(\eta) + \sum_{i=1}^{N-1}h_{i,i+1}^{(2)}(\eta)\] satisfies the following:
    \begin{enumerate}
        \item If the UTM halts on input $\eta$, then $\{H^u_N(\eta)\}_N$ is gapless.
        \item If the UTM does not halt on input $\eta$, then $\{H^u_N(\eta)\}_N$ has a gap of at least $1$ for all $N \in \mathbb{N}$.
    \end{enumerate}
\end{theorem}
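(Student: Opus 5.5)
The plan is to follow the four-step architecture of Section~\ref{subsec:original_overview}, building $h^{(1)}(\eta)$ and $h^{(2)}(\eta)$ as a coupling of three pieces: a Marker/computation Hamiltonian $H^{\mathrm{comp}}$, a trivial gapped Hamiltonian $H^{\mathrm{triv}}$ with a unique zero-energy product ground state, and a fixed gapless Hamiltonian $H^{\mathrm{dense}}$ (e.g.\ a critical XY-type chain, shifted so that its spectrum becomes dense in $[0,\infty)$ as $N\to\infty$). The local Hilbert space is $\mathcal{H}=\mathbb{C}\ket{0}\oplus(\mathcal{H}_{\mathrm{comp}}\otimes\mathcal{H}_{\mathrm{dense}})$. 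The interaction adds a shared energy offset $-(1+\delta)$ on the second summand, and penalizes configurations mixing $\ket{0}$ with the second summand by a term of order $1$. With these choices,
\[
\spec H^u_N = \{0\}\cup\bigl(\spec H^{\mathrm{comp}}_N+\spec H^{\mathrm{dense}}_N\bigr)\cup S_N,
\]
where $S_N\subset[1,\infty)$ up to the shift. The integer $\eta$ enters only through the on-site term $h^{(1)}(\eta)$, which initializes the input tape, for instance via a phase-estimation preparation of the binary digits of $\eta$ as in \cite{Bausch_2020}.

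The first step is to encode the UTM evolution on a segment of length $L$ with a Feynman--Kitaev history-state clock, run by a translation-invariant 1D QTM-to-Hamiltonian mapping in the style of Gottesman--Irani \cite{GI10}. Non-halting branches are penalized, so a segment whose computation has not halted by time $T(L)$ carries energy $\Theta(1/T^2)$, while halted segments receive a bonus. The second step inserts the Marker Hamiltonian: boundary spins $\ket{\bd}$ split the chain into segments, each segment gets a constant bonus $-\beta$ at its boundary, and the per-segment energy is $E(L)$, combining the history-state cost and the falloff term. The aim is to show:
\begin{itemize}
\item In the halting case, the optimal segment length $L^\ast$ is finite, so the energy density is strictly negative: $\lambda_0(H^{\mathrm{comp}}_N)\le -cN$ with $c>0$.
\item In the non-halting case, $\lambda_0(H^{\mathrm{comp}}_N)\ge 0$, with an $O(1)$ gap to excitations once the offset $\delta$ is tuned.
\end{itemize}

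The third step is the coupling conclusion. In the non-halting case every state in the second summand has energy at least $1$, so the ground state is $\ket{0}^{\otimes N}$ with gap at least $1$ for all $N$. In the halting case the ground-state energy of the second summand diverges to $-\infty$, and adding the dense spectrum of $H^{\mathrm{dense}}_N$ makes the low-lying spectrum dense, which gives gaplessness.

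I expect the main obstacle to be the Marker analysis in the second step. One must prove that the ground state of the marker-plus-computation Hamiltonian actually tiles into optimal-length segments. That requires sharp two-sided bounds on the lowest eigenvalue of the history-state Hamiltonian on a segment, with errors smaller than the $\Theta(1/T^2)$ penalty. It also requires ruling out every non-standard configuration: malformed segments, wrong boundary states, and partial computations. For this I would use the clairvoyance/Lemma-type bounds from \cite{Bausch_2020} together with a lower bound for segments built from ``bad'' local patterns, namely a Kitaev-geometric-lemma argument combined with a penalty count per defect.
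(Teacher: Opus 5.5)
Your overall architecture matches the paper's: a history-state computation on $\ket{\bd}$-delimited segments, then the direct sum with a trivial and a dense Hamiltonian plus a guard term, as in the proof of Theorem~\ref{th:undecidable-2}. But your second step misses the idea the whole construction rests on.

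\textbf{The per-segment constant must vanish exactly.} You give each segment a constant boundary bonus $-\beta$ alongside $E(L)$, and leave $\beta$ and $\delta$ as constants to be tuned. No nonzero net $O(1)$ energy per segment can work.
\begin{itemize}
\item If it is a bonus: the non-halting cost of a segment is $\Theta(1/T(L)^2)\to 0$, so some finite $L_0$ has $E(L_0)-\beta<0$. Tiling the chain with $\lfloor N/L_0\rfloor$ such segments then sends $\lambda_0(H^{\mathrm{comp}}_N)\to-\infty$ for a machine that never halts.
\item If it is a penalty: once $w_{\mathrm{halt}}$ is large, a halting segment gains too little to pay for it. This is exactly the $\varepsilon$-perturbation of Section~\ref{sec:1d_perturbed}, which makes the problem decidable.
\end{itemize}

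What is needed is an exact cancellation. The walk ground energy of each segment is $-\tfrac12$ minus a term in $[4^{-f(w)},2^{-f(w)}]$. The $-\tfrac12$ is offset exactly by the boundary penalty $P''=\tfrac12$, and the $P'$ terms contribute a fixed $-4$ regardless of how many segments there are. What remains per segment is a strictly negative remainder in $[-\mu 2^{-f(w)},-\mu 4^{-f(w)}]$.

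This remainder is the only negative contribution: a halted segment has computational energy exactly $0$ and gets no bonus from the computation. It must be dominated by the computational penalty at every length where the machine has not halted or the input is truncated, as in Eq.~\eqref{eq:marking_energy_bound}. Since $T(w)$ can be exponential in $w$ ($T(w)\le QwA^w$), the unary counter of path length $w$ is too short. It is replaced by a counter whose path length $f(w)$ is large enough that $2^{f(w)}$ beats $T(w)^3$; the construction takes $f(w)=2^w$. Your plan aims instead at history-state eigenvalue bounds with errors below $\Theta(1/T^2)$, but the quantity that actually has to be controlled is the marker bonus.

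\textbf{The global offset is inconsistent.} You have $\lambda_0(H^{\mathrm{comp}}_N)\ge 0$ and a shift of $-(1+\delta)$ on the second summand. From this, your third step cannot conclude that every state there has energy $\ge 1$. You need a shift of $+1$, and it must not grow with $N$. A translation-invariant on-site constant shifts the energy by $cN$. For any fixed $c>0$ this eventually outweighs the halting energy $-\lfloor N/w_{\mathrm{halt}}\rfloor\,\Omega(1/T_{\mathrm{halt}}^3)$ whenever $w_{\mathrm{halt}}T_{\mathrm{halt}}^3$ is large enough, so those halting instances would become gapped. The paper gets an $N$-independent shift from the Gottesman--Irani boundary trick (Lemma~\ref{lem:lemma23}): $+1$ on each of the $N$ sites and $-1$ on each of the $N-1$ bonds nets exactly $+1$.

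\textbf{The obstacle you single out is not where the difficulty lies.} All terms conserve the positions of $\ket{\bd}$, and $H_{TM}$ and $P'''$ commute with $H_f$. So the Hamiltonian is block diagonal in boundary patterns, and its energy is exactly a sum of segment energies (Theorem~\ref{th:combined_on_chain}). Malformed patterns, such as $\ket{\bd\bd}$ or unmarked chain ends, are ruled out because the $P$ and $P'$ penalties of size $2$ exceed the at most $\tfrac12$ change in segment energy. No geometric-lemma defect counting is needed.
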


\subsection{Main result}\label{subsec:main_result}
In our work, we consider the family of 1-dimensional Hamiltonians defined in Theorem \ref{th:undecidable} from \cite{Bausch_2020}, and we perturb it with a 1-local term of arbitrarily small strength $\varepsilon > 0$. This perturbation will alter the mechanism behind the marking process, disabling the energy amplification step, and thus reducing the problem of separating gapped and gapless instances to a decidable one. Our result can be summarized as follows:

\begin{theorem}\label{th:main_result}
Let $\varepsilon > 0$ be arbitrarily small, and consider the local interactions $h^{(1)}(\eta)$ and nearest-neighbor interactions $h^{(2)}(\eta)$ from Theorem \ref{th:undecidable}. Then there exists a 1-local, rank 1, interaction $p^{(1)}_\varepsilon = \varepsilon\ket{\bd}\!\!\bra{\bd}$, with norm $\norm{p^{(1)}_\varepsilon} = O(\varepsilon)$, such that the spectral gap problem for local interactions $h^{(1),\varepsilon}(\eta) = h^{(1)}(\eta) +  p^{(1)}_\varepsilon$ and $h^{(2)}(\eta)$ is decidable.
In fact, denoting
\[H^{\varepsilon}_N(\eta) = \sum_{i=1}^N ( h_i^{(1)}(\eta) + p^{(1)}_\varepsilon) + \sum_{i=1}^{N-1}h_{i,i+1}^{(2)}(\eta),\]
and by $\lambda^N_{min}$ the ground state energy of $H^{\varepsilon}_N(\eta)$,
there exists a critical size $w_\varepsilon$, independent of $\eta$ and $N$, such that exactly one of the following two conditions holds: 
\begin{enumerate}
    \item $\lambda^N_{min} \geq 1$ for all $N \in [1, \dots, w_\varepsilon]$.
    \item $\lambda^N_{min} < 0$ for some $N \in [1, \dots, w_\varepsilon]$.
\end{enumerate}
Then we have that in the first case, $\{H_N^{\varepsilon}(\eta)\}_{N}$ is gapped, otherwise it is gapless. 
Since checking whether $\lambda^N_{min} <0$ for some $N \in [1, \dots, w_\varepsilon]$ is a finite program, the spectral gap problem for the perturbed interactions is decidable.
\end{theorem}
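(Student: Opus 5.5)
We follow the structure of the construction in \cite{Bausch_2020}. The Hamiltonian $H^u_N(\eta)$ is the sum of two parts: a marker/computation part whose low-energy sector is spanned by configurations that split the chain into segments delimited by $\ket{\bd}$, and a coupling to a fixed gapped part and a fixed gapless part. The gapless part is unlocked only when the marker/computation energy becomes negative. In the original construction, a segment of length $w$ carries an energy $E(w)$ with the following behavior. If the UTM has not halted within that length, $E(w) = 0$ or slightly positive. If the computation halts inside the segment, $E(w)$ contains a bonus of order $-\Theta(1/T^2)$ together with a boundary penalty. Amplification works because many segments of the optimal length can be placed, giving total energy $\approx (N/w^*)E(w^*) \to -\infty$.

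The first step is to write the ground state energy of $H^\varepsilon_N$, restricted to the marker sector, as a minimum over segmentations of the chain. Adding $p^{(1)}_\varepsilon$ charges an extra cost $\varepsilon$ for each boundary symbol, so segment energies become $E_\varepsilon(w) = E(w) + \varepsilon$. Since $\varepsilon \ge 0$, $p^{(1)}_\varepsilon$ commutes with the classical marker configuration basis. The block-diagonal structure used in \cite{Bausch_2020} is therefore preserved, and the analysis reduces to the same one-dimensional optimization with shifted segment costs.

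The key quantitative step is to show that $E(w) \ge -c/w^{\alpha}$ for all $w$, for some explicit $c$ and $\alpha > 0$ coming from the construction. This holds because the bonus is at most of order $1/T^2$ and any computation that fits in length $w$ has $T \gtrsim w$ up to the clock overhead; alternatively one can use the explicit lower bound on the history-state energy from \cite{Bausch_2020}. We then define $w_\varepsilon$ as the length beyond which $c/w^\alpha < \varepsilon$, with a suitable margin. From this point the argument is a dichotomy. If some segment with $w \le w_\varepsilon$ has $E_\varepsilon(w) < 0$, repeating that segment gives energy that is negative and linear in $N$. The coupling then makes the gapless Hamiltonian dominate, so the family is gapless, and the negative energy is already visible at the finite size $N = w$, where $\lambda^N_{min} < 0$. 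Otherwise every segment has $E_\varepsilon(w) \ge 0$. For short segments this is by hypothesis; for long segments it follows from the bound $E(w) \ge -c/w^\alpha > -\varepsilon$. Every segmentation then has nonnegative energy, the marker sector cannot bring the energy below $0$, and the gap-$1$ argument of Theorem~\ref{th:undecidable}, item 2, carries over verbatim. We also need the fact that $\lambda^N_{min} \ge 1$ whenever it is not negative. This gap between $0$ and $1$ follows from the spectrum of the coupling construction, where the trivial ground state sits at energy offset $1$ in the non-halting case, as in \cite{Bausch_2020}.

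The main obstacle is the uniform bound $E(w) \ge -c/w^\alpha$ with constants independent of $\eta$. The input $\eta$ is encoded in the interaction, and one must check that the bonus term and the tuning constants of \cite{Bausch_2020} do not scale with $\eta$ in a way that lets $E(w)$ dip below $-\varepsilon$ at large $w$. I would first try to extract from \cite{Bausch_2020} the explicit segment-energy formula, consisting of the Feynman--Kitaev eigenvalue $\sim 1-\cos(\pi/T)$ minus the bonus, and bound it using only $T \ge w/\mathrm{const}$. A secondary check is that the finite-size test at $N = w$ correctly reflects periodic repetition, since open boundary conditions make the single-segment chain exactly one segment.
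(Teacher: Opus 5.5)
\textbf{Structure.} Your architecture is the paper's. The perturbation adds a fixed cost per boundary, shifting every segment energy (Theorem~\ref{th:marking_epsilon_hamiltonian}). Beyond a critical length no segment can be negative. A negative segment below that length shows up at finite size and is repeated to drive the energy to $-\infty$; otherwise every segmentation is nonnegative for all $N$.

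\textbf{The key bound comes from the marker, not the computation.} The step you flag as the main obstacle rests on a misreading of where the negative energy comes from. In this construction $H_{TM}+P'''$ never produces a bonus. One has $\lambda_{TM}(w)=0$ if the machine halts within $w$, and $\lambda_{TM}(w)>0$ (of order $1/T^2$) otherwise. The only negative term is the marker bonus $\delta(w)\in[\mu 4^{-f(w)},\mu 2^{-f(w)}]$ of Theorem~\ref{th:marking_hamiltonian}, which is independent of the computation. So, as in Theorem~\ref{th:perturbed_energy}, every segment has energy at least $\mu\varepsilon-\mu 2^{-f(w)}$. The per-boundary cost is $\mu\varepsilon$ because the perturbation sits inside $\mu H^\varepsilon_f$, so $\mu$ cancels. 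With $f(w)=2^w$, any integer $w_\varepsilon>\log_2\log_2(1/\varepsilon)$ works. This is explicit and independent of $\eta$, so the uniformity you worry about is automatic. Your bound $E(w)\ge -c/w^\alpha$ is true but far weaker than what is available.

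\textbf{Your justification through $T\gtrsim w$ is at best roundabout.} If $T$ is the halting time, then ``a computation that fits in length $w$ has $T\gtrsim w$'' is false, since a machine can halt after few steps inside a long segment. A bonus of order $1/T_{halt}^2$ would then not decay in $w$ at all. If instead $T=T(w)$ is the clock length of the segment, then ``bonus $\lesssim 1/T^2$'' holds only because $f$ was tuned against $T_{max}(w)$ in \eqref{eq:marking_energy_bound}; it is the marker bound in disguise. Bounding ``the Feynman--Kitaev eigenvalue minus the bonus'' is also misdirected: that eigenvalue is a nonnegative penalty and can simply be dropped from a lower bound.

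\textbf{The claim $\lambda^N_{min}\ge 1$.} Your assertion that $\lambda^N_{min}\ge 1$ whenever it is not negative does not follow from the coupling construction. The unit shift of Lemma~\ref{lem:lemma23} sends nonnegative marker/TM energies to $[1,\infty)$. It equally sends a slightly negative single-segment energy $\mu\varepsilon-\delta(w)<0$ into $(0,1)$.

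\textbf{How to set up the finite test.} Do it as the paper does in Theorem~\ref{th:decidable_gs}:
\begin{enumerate}
\item Run a sign test on the \emph{unshifted} marker/TM ground energy for all chain lengths up to $w_\varepsilon$, enlarged by $O(1)$ to fit the two boundary symbols.
\item Apply the shift and the coupling to $H_0$ and $H_d$ only afterwards.
\item If all tested energies are nonnegative, then every segment of every length is nonnegative. The shifted energy is then at least $1$ for every $N$, and the gap is at least $1$.
\item If one is negative, it contains a negative segment. Repeating that segment gives energy that is negative and linear in $N$, hence gaplessness.
\end{enumerate}

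\textbf{Minor point.} The perturbation commutes with the marker configuration basis because it is diagonal, not because $\varepsilon\ge 0$.
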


%% file: 2_Previous.tex
\section{Analysis of the 1D undecidable Hamiltonian}\label{sec:1d_undecidable}
We start by giving a summarized outline of the work done in \cite{Bausch_2020}, as it is fundamental in order to see how the undecidability result breaks if slightly perturbed. We start by studying the first ingredient, the computational Hamiltonian, which will remain unperturbed.

\subsection{Computational Hamiltonian}\label{subsec:computational_hamiltonian}
As usual in these constructions, the manner of encoding a computation in a ground state is based in \cite{GI10}, by constructing a history state Hamiltonian, denoted in \cite{Bausch_2020} as $H_{TM}(M, \phi(\eta))$. Here, $M$ is a classical universal Turing machine, and $\phi(\eta)$ encodes the input parameter $\eta \in \mathbb{N}$ with $|\eta|$ binary digits as a binary fraction, interleaved by 1s: \[\phi(\eta) = 0.\eta_11\eta_21\dots\eta_{|\eta|}00\]

This particular way of encoding the input is done to face an issue previous to the computation: to be able to deterministically write any possible (binary) input, and check that it was done correctly. This question was solved in Section 3 in the 2-dimensional result \cite{CPW22}. The procedure used in the 1-dimensional version of the result is a modification of it, as the problem to circumvent is the same: what happens if the procedure is truncated?

In the 2-dimensional construction, one could have the help of knowing exactly the size of the segment one is working on. However, the segment size here is unknown, so they tail some extra steps in order to check if the binary expansion was done or truncated, arriving at a way of locally checking it: in the former case, the overlap of the least significant qubit with state $\ket{-}$ is $0$, and in the latter, at least $\mu = 2^{|\phi|}$.

By inflicting a penalty to a head state over $\ket{-}$, one can avoid encountering incorrect inputs in the ground state configurations. However, this only ensures a penalty of energy $\mu/T^3$. In order to avoid it being negligible, the marking Hamiltonian that will be later constructed should also be scaled by $\mu$ as well, in order to have both computational and marking Hamiltonians in comparable orders of magnitude.

Once this preliminary step is done, one can start to build the desired 1-dimensional computational Hamiltonian. In order to encode the input properly, the procedure based on quantum phase estimation (\cite{Bausch_2020}, section V) runs first. Then, M, that will be an universal TM, uses that encoded input and runs the computation on it. We summarize the behavior of this first Hamiltonian in the following definition:

\begin{lemma}[\cite{CPW22} Theorem 33, combined with \cite{Bausch_2020} Lemma 15]\label{th:htm}
The Hamiltonian $H_{TM}(M, \phi(\eta))$ as defined in \cite{Bausch_2020}, sections III.B and V, is 1-dimensional, translationally invariant and nearest-neighbor, and behaves as follows:
\begin{enumerate}
    \item A Quantum Turing machine performs a quantum phase estimation procedure on a single-qubit unitary, that encoding input $\eta \in \mathbb{N}$ as $\phi(\eta) = 0.\eta_11\eta_21\dots\eta_{|\eta|}00$.
    \item The classical universal Turing machine M uses this binary expansion form of the input to perform a computation on it.
\end{enumerate}
\end{lemma}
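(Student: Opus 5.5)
This lemma only packages properties of a construction taken from earlier work, so the proof will assemble cited results rather than derive anything new. I would split it into the structural claim (1D, translation invariant, nearest neighbour) and the two behavioural claims.

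\textbf{Structure.} I would recall that $H_{TM}(M,\phi(\eta))$ is built in \cite{Bausch_2020}, Section III.B, as a Gottesman--Irani type history-state Hamiltonian \cite{GI10}. In that construction each local term is a transition rule $\ket{ab}\!\bra{cd}+\mathrm{h.c.}$ on two adjacent sites, or a 1-local or 2-local penalty term. The same terms are applied at every site of the chain. Listing these term types therefore shows directly that the Hamiltonian has the form of Definition~\ref{def:1d_hamiltonian_definition}, with fixed $h^{(1)}$ and $h^{(2)}$.

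\textbf{Item 1.} For the phase-estimation stage I would invoke \cite{CPW22}, Theorem~33. It gives a QTM which, for a single-qubit unitary whose phase is a finite binary fraction, writes that binary expansion exactly on the tape. The adaptation in \cite{Bausch_2020}, Section~V and Lemma~15, realises this QTM as a 1D nearest-neighbour history-state Hamiltonian. It also appends the check that distinguishes a completed expansion from a truncated one: the overlap of the least significant qubit with $\ket{-}$ is $0$ or at least $\mu=2^{|\phi|}$. Choosing the phase $\phi(\eta)=0.\eta_11\eta_21\dots\eta_{|\eta|}00$ then gives exactly the encoding stated in the lemma.

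\textbf{Item 2.} I would note that \cite{Bausch_2020} dovetails the two machines: after the QPE machine finishes, the classical UTM $M$ runs on the same tape as a reversible classical TM. Any classical reversible TM is a special case of a QTM (Appendix~\ref{sec:turing-machines}, following \cite{BV93}), so the concatenated machine is still encodable by the same history-state construction. The resulting Hamiltonian remains translation invariant and nearest neighbour, because the only change is an enlarged local alphabet.

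The main obstacle is not any estimate. It is checking that the cited statements genuinely compose, in two respects. First, the 2D result \cite{CPW22} relied on knowing the segment length, whereas here the segment length is unknown. Second, truncated runs must still be detectable locally. I would resolve both by pointing to Lemma~15 of \cite{Bausch_2020}, which provides exactly the truncation test that removes the dependence on knowing the length.
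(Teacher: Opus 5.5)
Your proposal takes essentially the paper's approach: the paper gives no proof of this lemma and simply imports it from \cite{CPW22}, Theorem 33, and \cite{Bausch_2020}, Lemma 15. The surrounding text gives the same informal account you do: phase estimation writes $\phi(\eta)$, a local truncation check replaces knowledge of the segment length, and $M$ then runs on the result. Your added remarks on locality and on treating a reversible classical machine as a QTM are harmless elaborations of that citation-based argument. One small caution: the bound ``overlap at least $\mu=2^{|\phi|}$'', copied from the paper's text, cannot be literal, since an overlap is at most $1$ and the later inequality $\mu/2^{f(w)}<1/2^{f(w)}$ needs $\mu<1$; it should be read as an inverse-exponential quantity in $|\phi|$.
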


\subsection{Marking the segments}\label{subsec:marking}
As mentioned in Section \ref{subsec:original_overview}, the idea behind the Marking Hamiltonian is to solve the energy amplification issue. To understand how it is constructed, in this section, we summarize the results of \cite{Bausch_2020}, Section IV.

The goal of the Marking Hamiltonian is first to induce some partitions in the chain, turning it into different segments divided by boundary state $\ket{\bd}$. Start by defining special boundary state $\ket\bd$. The goal is to use it to fragment the chain in different partitions or segments:

\[ \ket{\bd \dots \bd \dots \bd}\]

and then, create a ``walk'' Hamiltonian, that mimics the behavior of a clock in each of them:

\[
\begin{aligned}
&\ket{\bd \wa \wa \wa \dots \wa \bd} \\
&\ket{\bd \ba \wa \wa \dots \wa \bd} \\
&\ket{\bd \ba \ba \wa \dots \wa \bd} \\
&\phantom{\bd \ba \ba \wa }\vdots \\
&\ket{\bd \ba \ba \ba \dots \ba \bd}
\end{aligned}
\]

For that purpose, the Hamiltonian $H = H_{walk} + P + P'$ is created, where:

\begin{itemize}
    \item $H_{walk}$ describes the allowed transitions: $\ket{\ba\wa\wa} \rightarrow \ket{\ba \ba \wa}$, and $\ket{\ba\wa\bd} \rightarrow \ket{\ba \ba \bd}$.
    \item $P$ encodes the penalties $\ket{\bd\bd}$, $\ket{\wa\ba}$ and $\ket{\bd\wa}$, needed to ensure the correct evolution of the clock as well.
    \item $P'$ collects a bonus of $-4$ to every appearance of $\ket{\bd}$, and a $+2$ to them appearing next to any symbol.
\end{itemize}

With this description, any properly bounded chain as according to $H$ will have a ground state energy of $-4$, in the form of many $\ket{\bd}$-bounded segments. Due to the construction, the state between boundaries lies within the span of states $\{ \ket{1} = \ket{\ba\wa\wa\dots\wa\wa}, \ket{2} = \ket{\ba\ba\wa\dots\wa\wa}, \dots, \ket{w-1} = \ket{\ba\ba\ba\dots\ba\wa}, \ket{w} = \ket{\ba\ba\ba\dots\ba\ba}\}$, where $w$ denotes the length of the segment (without boundaries).

The authors then add another bonus term, $B$, that gives $-1$ energy to states where the arrow has reached the right boundary, which is just one state between the set mentioned above. Taking this into account, a careful analysis of the energy of the $w$-segment (segment of size $w$) bounded by $\ket{\bd}$, as studied in \cite[section IV.D]{Bausch_2020}, yields that its ground state energy lies in the interval $\bigg(-\dfrac{1}{2} - \dfrac{1}{2^w}, \;\; -\dfrac{1}{2} - \dfrac{1}{4^w}\bigg)$. This is done by noting that the transition rules in $H_{walk}$ result in a Hamiltonian that is precisely the Laplacian of the graph determined by said transition rules, where each state is a vertex, and the allowed transition, an edge between them. Additionally, they find that this is the unique negative eigenvalue, thus being a Hamiltonian with a gap of at least $1/2$ (\cite{Bausch_2020}, Lemma 6 and Corollary 9).

By adding an additional penalty $P''$ of $1/2$ to each apparition of $\ket{\bd}$, one can shift the $-1/2$ appearing in the previous result. Then, the minimal energy configuration of $H' = H + P'' + B$ over the chain is guaranteed to be a properly bounded configuration of multiple $w_i$-segments, with a minimal energy that is:
\begin{itemize}
    \item The sum of all segment energies.
    \item A plus of $1/2$ for each boundary $\ket{\bd}$.
    \item The $-4$ bonus that resulted from being in a  properly bounded configuration as defined by $H$.
\end{itemize}

By canceling each $-1/2$ of the segment with its right boundary penalty of $1/2$, this then results in a total energy of $-7/2 + \sum_i \lambda_i$, where the $-7/2$ is the result of the $-4$ plus the extra $1/2$ that remains from the leftmost boundary, and that can be removed with a constant energy shift in the system. Thus, the final energy given by the marker, $\lambda_f$ is bounded as $-\sum_i 1/2^{w_i} \leq \lambda_f \leq -\sum_i 1/4^{w_i}$.

Additionally, one has to guarantee that the bonus given by this marking is strictly smaller than the computational penalties, in order to make the final construction to be dominated by the computational behavior. To do so, the authors in \cite{Bausch_2020} explain in Section IV.E how the unary counter described by $\ket{\wa}$ and $\ket{\ba}$ can be changed to allow more complicated 2-local transition rules, with a consequent increase in the local dimension. Then, instead of a lineal dependence on the segment length $w$, one gets a dependence on $f(w)$: the new length of the path graph evolution on a segment of length $w$. Then, the final marking Hamiltonian, $H_f$, has a final ground state energy of: $-\sum_i 1/2^{f(w_i)} \leq \lambda_f \leq -\sum_i 1/4^{f(w_i)}$.

In their Remark 12, they state that one can have $f(w) = (d-c_2)^w$, for local dimension $d$ and constant $c_2$, giving then a double exponential decay of the energy of the $w$-segments, which is enough to guarantee the dominance of the computational behavior over the formation of markings. In particular, the authors used $f(w)=2^w$, guaranteeing a double exponential decay in the marking bonuses. We summarize the behavior of the marking Hamiltonian over the chain in the following theorem:

\begin{theorem}[\cite{Bausch_2020}, Theorem 10]\label{th:marking_hamiltonian}
The Hamiltonian $H_f = H_{walk} + P+ P' + P'' + B$ as defined in \cite{Bausch_2020}, section IV, is 1-dimensional, translationally invariant and nearest-neighbor, and:
\begin{itemize}
    \item Its ground state configuration is a properly bounded configuration of multiple $w_i$-segments, separated by $\ket{\bd}$.
    \item Its ground state energy is
    \begin{equation}\label{eq:segment_contribution}
        -\sum_i 1/2^{f(w_i)} \leq \lambda_f \leq -\sum_i 1/4^{f(w_i)}
    \end{equation}
    \item It has a gap of size $\geq 1/2$.
\end{itemize}
\end{theorem}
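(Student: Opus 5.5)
The plan is to use a conservation law. Every transition in $H_{walk}$ only turns a $\ket{\wa}$ into a $\ket{\ba}$ in front of an arrow head, and never creates, destroys or moves a $\ket{\bd}$. The terms $P,P',P'',B$ are diagonal in the computational basis. Hence $H_f$ commutes with the projector onto any fixed pattern of $\ket{\bd}$ positions, and we can write $H_f=\bigoplus_{\sigma}H_f|_{\sigma}$ over boundary patterns $\sigma$ of the chain. The spectrum of $H_f$ is then the union of the spectra of these blocks, and the three claims of the theorem reduce to (i) an energy accounting for each block and (ii) a comparison across blocks.

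First, for a fixed $\sigma$, I would split the block further into the orbits of the walk rules, i.e.\ classical configurations connected by transitions. On an orbit that contains a forbidden local pattern ($\ket{\bd\bd}$, $\ket{\wa\ba}$, $\ket{\bd\wa}$, or a symbol missing its boundary), $P$ or $P'$ adds at least $+1$ relative to the properly bounded energy. I would argue this by the local counting in \cite{Bausch_2020}: since $P,P'$ are non-negative integer-valued apart from the $-4$ bonus, any violation costs at least one unit. On a properly bounded orbit with segments of lengths $w_1,\dots,w_k$, the block is a tensor sum $\sum_i(\Delta_{w_i}+B_{w_i})$ plus a constant. Here $\Delta_{w}$ is the Laplacian of the path graph of length $f(w)$, and $B_w$ is the $-1$ bonus on the final state. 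By \cite{Bausch_2020}, Lemma 6 and Corollary 9, each summand has a unique negative eigenvalue $\lambda_i\in(-\tfrac12-2^{-f(w_i)},-\tfrac12-4^{-f(w_i)})$, and all its other eigenvalues are $\geq 0$. Next I would cancel each $-\tfrac12$ against the $P''$ penalty of the right boundary, and remove the leftover $-\tfrac72$ by the constant shift. This gives the energy $\sum_i(\lambda_i+\tfrac12)$, which is exactly the bound \eqref{eq:segment_contribution}, for the lowest state of that block.

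Within a fixed proper block, any excitation raises at least one segment to an eigenvalue $\ge 0$, i.e.\ by at least $\tfrac12$ above the lowest state. Combined with the cost of at least $1$ for improper blocks, this shows that the ground state is a properly bounded configuration of segments, with energy in the stated interval. This settles the first two items.

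The main obstacle is the global gap of at least $\tfrac12$, because distinct proper segmentations $\sigma\ne\sigma'$ both give energies of the form $-\sum_i\Theta(2^{-f(w_i)})$. A priori these can differ by arbitrarily little. To handle this, I would first write the total energy of the lowest state in a proper block as $c_1\cdot\#\{\bd\}+c_2+\sum_i(\lambda_i+\tfrac12)$ before the cancellation. Here the coefficients $c_1,c_2$ come from the $-4$, $+2$, $+\tfrac12$ boundary terms and are half-integers. I would then show that the boundary-count term determines a unique optimal number and placement of boundaries for each chain length, and that any other proper pattern changes it by at least $\tfrac12$. The doubly exponentially small corrections $2^{-f(w_i)}$ are far below $\tfrac12$ and can only break ties. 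If this fails because several patterns share the same boundary count, I would fall back on the statement as formulated in \cite{Bausch_2020}, Theorem 10. There the gap is measured from the ground state to the rest of the spectrum after showing that the optimal segmentation is unique. I would check that the near-degenerate competing segmentations are excluded by the penalty structure rather than by energy differences alone.
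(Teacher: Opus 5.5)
Your treatment of the first two items matches the paper, which summarizes \cite{Bausch_2020}. The positions of $\ket{\bd}$ are conserved, so $H_f$ splits into blocks. In a properly bounded block, each segment carries a path-graph Laplacian plus the bonus $B$, with a unique negative eigenvalue in $(-\tfrac12-2^{-f(w)},-\tfrac12-4^{-f(w)})$ and the rest of its spectrum $\geq 0$ (Lemma 6 and Corollary 9 there). The $P''$ term of each right boundary cancels the $-\tfrac12$, and the leftover $-\tfrac72$ is shifted away.

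One minor inaccuracy: your claim that an improper block costs ``at least one unit'' ignores that $B$, $P''$ and the segment energies also change. For instance, $B$ can offset a unit $P$ penalty inside a defective segment, so the margin is smaller. It is still positive, which is all the first item needs. This is why the paper argues with explicit local moves (deleting one boundary of a $\ket{\bd\bd}$, moving a boundary to an edge), combined with the fact that a segment's bonus changes by at most $\tfrac12$.

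The genuine problem is your last paragraph. The step ``any other proper pattern changes [the energy] by at least $\tfrac12$'' is false, and so is the global gap you want to derive from it. $P''$ is tuned precisely so that the $+\tfrac12$ per boundary cancels the $-\tfrac12$ per segment. After this cancellation no half-integer boundary-count term survives: the lowest energies of all proper blocks have the form $-\tfrac72+\sum_i(\lambda_i+\tfrac12)$, with $\lambda_i+\tfrac12\in(-2^{-f(w_i)},-4^{-f(w_i)})$. So two patterns differ only through these small terms. In particular, patterns whose segment lengths are permutations of one another are exactly degenerate. Whenever the optimal segmentation of the chain uses two different lengths, this gives $\lambda_1(H_f)=\lambda_0(H_f)$.

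Your fallback does not help either. Every proper pattern has zero $P$-penalty and the same $P'$ contribution ($-4$ from the two edge boundaries, net $0$ from interior ones), so the penalty structure cannot exclude the competitors. Hence the third bullet cannot hold as a statement about $\lambda_1(H_f)-\lambda_0(H_f)$ on the whole chain, and no argument will deliver it.

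It has to be read the way the paper reads it when it attributes the gap to Lemma 6 and Corollary 9 of \cite{Bausch_2020}: as a per-segment (fixed-boundary-block) statement. Each segment's lowest eigenvalue is separated by at least $\tfrac12$ from the rest of that segment's spectrum. You have already proved exactly this in your sentence ``within a fixed proper block, any excitation raises at least one segment to an eigenvalue $\geq 0$''. State the gap claim in that form and drop the attempt at a global gap.
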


\subsection{Combined Hamiltonian}
Once the structure of the chain is defined, the next step is to combine it with the computation. As explained in \ref{subsec:computational_hamiltonian}, in order to avoid incorrect inputs in the ground state, a penalty must be inflicted to a head state over $\ket{-}$, and we also need to  and differentiate the halting and non-halting behaviors, by penalizing the non-halting case. Collect these two penalties in the term $P''' = \ket{q_?;-}\bra{q_?;-} \sum_i\ket{h_i\bd}\bra{h_i\bd}$, where $\{h_i\}$ is the set of head states that are going to be penalized if found next to the boundary (i.e.: all of them), and $\ket{q_?;-}$ is the state, as per \cite{Bausch_2020}, Section V, that indicates that the QPE procedure was not properly finished.

By combining the marking Hamiltonian $H_f$ (properly scaled by $\mu$), the previous computational Hamiltonian $H_{TM}$, and the penalties of $P'''$, the final Hamiltonian renders the following behavior over a single segment, as collected in Theorem 20 of \cite{Bausch_2020}, and as seen in Figure \ref{fig:original_segment_energy}.

\begin{theorem}[\cite{Bausch_2020}, Theorem 20]\label{th:theorem20}
The Hamiltonian $H_w = \mu H_f + H_{TM} + P'''$, over a segment of length $w$, presents the following energy behavior:
\begin{itemize}
    \item If the machine is non halting, the energy will always be positive, and decreasing with segment length, due to the penalty being $O(1/T^3)$.
    \item If the machine halts, then the behavior up to halting size is as before, but once it is big enough to halt, the energy will turn negative. However, due to the decreasing nature of the marking bonus, this will be negative but increasing. Then, the minimal energy is achieved when the segment length is exactly the one needed to halt, and no extra space.
\end{itemize}
\end{theorem}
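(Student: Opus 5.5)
The plan is to analyze the energy of $H_w = \mu H_f + H_{TM} + P'''$ restricted to a single $\ket{\bd}$-bounded segment of length $w$. Following \cite{Bausch_2020}, the spectrum decomposes into independent contributions from the segments, because $H_f$ enforces proper bounding and all other configurations carry penalties of order at least $1/2$. The behavior then follows by comparing, as functions of $w$, three quantities: the computational penalty, the halting bonus, and the marking bonus.

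\textbf{Marking contribution.} By Theorem~\ref{th:marking_hamiltonian}, $\mu H_f$ contributes to a segment of length $w$ an energy in $[-\mu/2^{f(w)}, -\mu/4^{f(w)}]$. This is negative, and its modulus decreases doubly exponentially in $w$. It is also separated by a gap of at least $\mu/2$ from any improperly marked configuration.

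\textbf{Non-halting case.} Inside a properly marked segment, I restrict $H_{TM}+P'''$ to the history-state subspace produced by Lemma~\ref{th:htm}. There, the standard Feynman--Kitaev clock analysis gives a path-graph Laplacian of length $T(w)$ with an endpoint penalty. By the usual lower bound on the smallest eigenvalue of a Laplacian with a boundary potential, this yields a ground energy of at least $c/T(w)^3$, and in fact $\Theta(1/T^3)$. This covers two situations:
\begin{itemize}
\item[(a)] the computation does not halt within the segment, so the head reaches the boundary;
\item[(b)] the QPE step was truncated, which costs $\mu/T^3$ through the overlap with $\ket{-}$.
\end{itemize}
Since $T(w)$ grows only polynomially or singly exponentially in $w$, while the marking bonus decays like $4^{-f(w)}$ with $f(w)=2^w$, the sum is positive. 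Because both terms shrink as $w$ grows, I would argue that it is decreasing in $w$, checking monotonicity of the Laplacian eigenvalue bound in $T$.

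\textbf{Halting case.} Once $w \ge w_0(\eta)$, where $w_0(\eta)$ is the tape length needed to halt, the non-halting penalty in $P'''$ is no longer triggered. The computational part then has a frustration-free history state of energy $0$, so only the marking bonus remains and the total energy is negative. For $w > w_0$, this bonus shrinks in modulus, so the energy is negative but increasing. The minimum is therefore attained at $w=w_0$.

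\textbf{Main obstacle.} The hardest step is showing that the marking bonus never overwhelms the computational penalty for $w<w_0$. This requires a quantitative comparison $\mu\,2^{-f(w)} < c\,T(w)^{-3}$ that holds uniformly in $w$ and in the scaling by $\mu$. I would attack it first by using that $\mu=2^{|\phi|}\le 2^{w}$, since the input must fit inside the segment, and then the double-exponential choice $f(w)=2^w$ from Remark~12 of \cite{Bausch_2020}. The remaining work is to control cross terms between the marker and the computation, using the direct-sum structure guaranteed by the $\ket{\bd}$ boundaries.
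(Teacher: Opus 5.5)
Your outline follows the paper's proof. The segment energy is $\lambda_f+\lambda_{TM}$. The computational part is penalized in the three bad cases (truncated QPE, non-halting, halting beyond $w$) and vanishes once the halting computation fits. The marker bonus $|\lambda_f|\le \mu 2^{-f(w)}$ is dominated via $f(w)=2^w$. Several steps still need work:
\begin{itemize}
\item The comparison rests on $T(w)$ being at most singly exponential in $w$, which you only assert. The paper derives it from the recurrence (configuration-counting) bound $T(w)\le T_{\max}(w)=QwA^w$, with $Q$ internal states and $A$ tape symbols. This bound is what lets $2^{f(w)}$ dominate $T(w)^3$.
\item There are no cross terms left to control. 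The paper uses that $H_{TM}$ and $P'''$ commute with $H_f$, so the segment energy is exactly the sum.
\item Your claim $\mu\le 2^w$ ``since the input must fit'' is false exactly on the truncated segments of your case (b). There you do not need it: the truncation penalty $\mu/T^3$ carries the same factor $\mu$ as the marker bonus, so the comparison reduces to $T(w)^3<2^{f(w)}$. A bound on $\mu$ in terms of $w$ is only available once the input has been fully written, so make that case split explicit.
\item The penalized history-state energy is $\Theta(T^{-2})$, not $\Theta(T^{-3})$; $T^{-3}$ is only the safe lower bound used.
\end{itemize}

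The most serious problem is your monotonicity argument in the non-halting case. ``Both terms shrink'' does not make the sum decrease. $\lambda_{TM}(w)$ decreases, but $\lambda_f(w)<0$ \emph{increases} towards $0$ as its modulus shrinks, so the two contributions push in opposite directions. You would need $\lambda_{TM}(w)-\lambda_{TM}(w+1)>\lambda_f(w+1)-\lambda_f(w)$. That requires $T(w+1)>T(w)$ with a quantitative margin, which neither the $\Theta(\cdot)$ estimates nor your proposal provide.

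The paper's proof does not establish this monotonicity either. It proves only the signs: $\lambda_{TM}+\lambda_f\ge T^{-3}+\lambda_f>0$ in the penalized cases, and $\lambda=\lambda_f<0$ once halting fits.

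In the halting regime, your conclusion that the minimum sits at $w=w_0$ is sound. It follows from the bounds of Theorem~\ref{th:marking_hamiltonian} alone, without assuming exact monotonicity of the marker eigenvalue. With $f(w)=2^w$ one has $\mu 4^{-f(w)}=\mu 2^{-f(w+1)}$, so $\lambda_f(w)\le-\mu 4^{-f(w)}\le\lambda_f(w+1)$.
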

\begin{proof}

The computational Hamiltonian encounters penalties in three out of four cases: (i) when the QPE is truncated, (ii) when the machine is non halting, (iii) or when it is halting, but the size needed for doing so is greater than $w$. A history state Hamiltonian encoding a computation of length $T$ that picks up at least one energy penalty, has ground state energy of order $\Theta(1/T^2)$ \cite{BC18}. If it halts in size $w$, the energy is simply $\lambda_{TM} = 0$. If not, in all three, a safe asymptotic bound is $\lambda_{TM} < 1/T^3$. 

Additionally, the actual runtime $T=T(w)$ of the TM on segment size $w$ is at most the Poincaré recurrence bound $T_{max}(w) = Q \times w \times A^w$, where the constants $Q$ and $A$ are the number of internal states and symbols on tape, respectively. This bound grows exponentially in $A^w$, so $T^3_{max}(w) < A^{3w}$. As the bonus given by the marker, $\lambda_f < 0$, is never greater than $\mu/2^{f(w)}$ (Theorem \ref{th:marking_hamiltonian}), by choosing an appropriate $f(w)$ such that
\begin{equation}\label{eq:marking_energy_bound}
\dfrac{1}{T^3} = \dfrac{1}{T^3(w)} > \dfrac{1}{T_{max}^3(w)} > \dfrac{1}{A^{3w}} > \dfrac{1}{2^{f(w)}} > \dfrac{\mu}{2^{f(w)}} \geq - \lambda_f
\end{equation}

we can conclude that the marking bonus is $|\lambda_f| < 1/T^3$. This can be done if $2^{f(w)} > A^{3w}$, and the authors take, in particular, $f(w)=2^w$.

The different terms of the Hamiltonian commute, so the total energy is the sum of each part, $\lambda = \lambda_f + \lambda_{TM}$. Thus:
    \begin{itemize}
        \item If the TM is non halting, then: \[ \lambda_{TM} + \lambda_f \geq \dfrac{1}{T^3} + \lambda_f > 0,\] due to $1/T^3$ being a lower bound for the computational energy, and Equation \ref{eq:marking_energy_bound}.
        \item If it is halting, but without enough space to do so, same argument as above applies. Else, $\lambda_{TM} = 0$, and $\lambda = \lambda_f < 0$, of order $-\Theta(1/2^{f(w)})$.
    \end{itemize}
\end{proof}

\begin{figure}[hbtp]
\centering
    \includegraphics[width=0.8\textwidth]{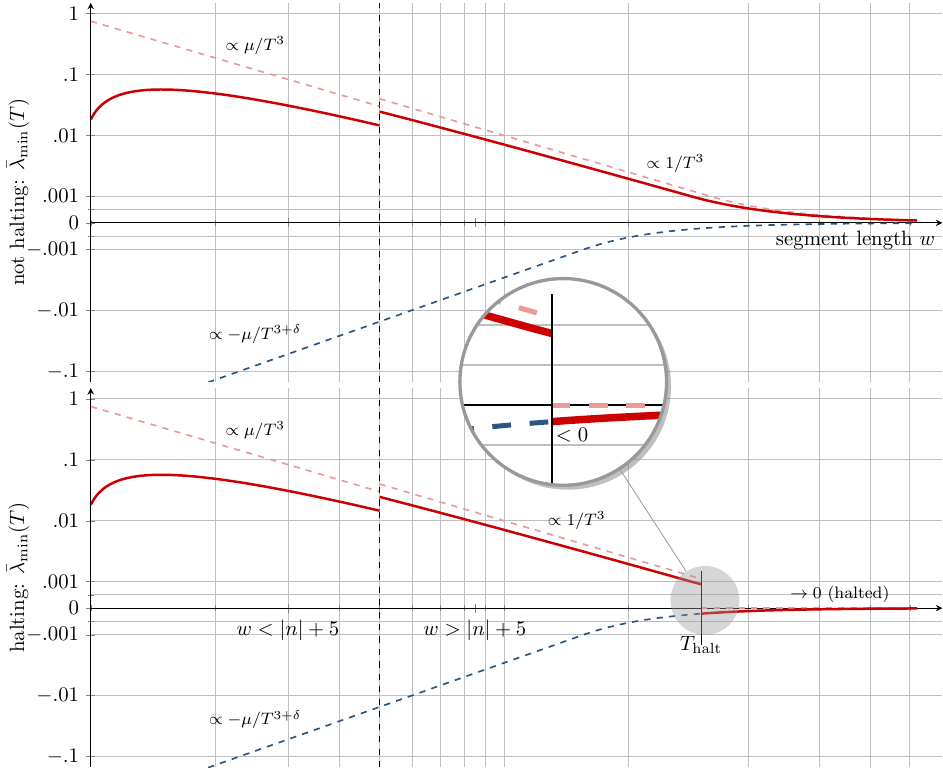}
    \caption{(Taken from \cite[Section III]{Bausch_2020}) Energy contribution $\bar\lambda_\text{min}(T)$ from a single segment of length $w$ of the marker and TM Hamiltonian $\mu H_f + H_{TM}$ shown in red, where $T$ is the runtime of the encoded computation. The dashed red and blue lines are bounds on the contribution of both $H_{TM}$ and $\mu H_f$, respectively.}
     \label{fig:original_segment_energy}
\end{figure}

We finish this section by summarizing the main technical theorem, Theorem 22 in \cite{Bausch_2020}, which explains the behavior of $H_w$ over a chain of size $N$, instead of over a single segment, which we denote by $H_N$.

\begin{theorem}[\cite{Bausch_2020}, Theorem 22]\label{th:combined_on_chain}
For any Turing machine $M$ and input $\eta \in \mathbb{N}$, we can explicitly construct a sequence of 1D, translationally invariant, nearest-neighbor Hamiltonians $H_N(\eta, M)$ on the Hilbert space $(\mathbb{C}^d)^{\otimes N}$ with the property that either:
    \begin{enumerate}
        \item $M(\eta)$ does not halt, and $\lambda(H_N) \geq 0$ for all $N \in \mathbb{N}$.
        \item $M(\eta)$ halts, and
        \[ \lambda(H_N) \begin{cases} 
                            < - \lfloor N/w_{halt} \rfloor \Omega(1/T_{halt}^{3}) & \text{if $N>w_{halt}$} \\
                            \geq 0 & \text{if $N\le w_{halt}$}
                        \end{cases}
        \]
        where $T_{halt}$ is the time needed for $M(\eta)$ to halt, and $w_{halt}$ is the length of the tape accessed during the computation.
    \end{enumerate}
\end{theorem}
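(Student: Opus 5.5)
The proof has three parts: decompose the chain into segments, compare the per-segment computational and marker energies, and sum over segments.

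\textbf{Decomposition.} By Theorem~\ref{th:marking_hamiltonian}, outside the correctly marked subspace the energy is at least a constant above the minimum (gap $\geq 1/2$). Inside it, the chain splits into $\ket{\bd}$-bounded segments. The summands $\mu H_f$, $H_{TM}$ and $P'''$ commute on each segment, so the ground state energy of $H_N$ is a minimum over segment decompositions $N = \sum_i (w_i+1)$ of a sum of per-segment energies $\lambda(w_i)$, up to the constant boundary shift. Following the proof of Theorem~\ref{th:theorem20}, I would write
\[
\lambda(w) = \lambda_{TM}(w) + \lambda_f(w), \qquad -\mu 2^{-f(w)} \le \lambda_f(w) \le -\mu 4^{-f(w)} .
\]
Here $\lambda_{TM}(w)\ge 1/T(w)^3$ when the machine has not halted within $w$, and $\lambda_{TM}(w)=0$ otherwise.

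\textbf{Dichotomy by segment length.} Write $w_{halt}$ and $T_{halt}$ as in Theorem~\ref{th:combined_on_chain}.
\begin{itemize}
\item If $w < w_{halt}$, the bound \eqref{eq:marking_energy_bound} gives $\lambda(w) > 0$. Hence segments shorter than $w_{halt}$ are never favourable, and in the non-halting case Theorem~\ref{th:combined_on_chain} gives $\lambda(H_N)\ge 0$.
\item If $w \ge w_{halt}$, then $\lambda(w) = \lambda_f(w) \in [-\mu 2^{-f(w)}, -\mu 4^{-f(w)}]$, which is negative and of order $-\Theta(2^{-f(w)})$.
\end{itemize}

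\textbf{Summing over segments.} When $N>w_{halt}$, I would tile the chain with $\lfloor N/(w_{halt}+1)\rfloor$ segments of length $w_{halt}$ and absorb the remainder into the last segment. Since that last segment is also at least $w_{halt}$ long, it contributes non-positive energy. Each full segment contributes at most $-\mu 4^{-f(w_{halt})}$, and Theorem~\ref{th:theorem20} converts $4^{-f(w_{halt})}$ into $\Omega(1/T_{halt}^3)$ once the exponents are tuned via $f$. This yields the upper bound
\[
\lambda(H_N) < -\lfloor N/w_{halt}\rfloor\,\Omega(1/T_{halt}^3).
\]
When $N \le w_{halt}$, no segment can reach length $w_{halt}$. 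Every admissible segment therefore has $\lambda(w_i)>0$, and the configuration with no marking is also non-negative. This gives $\lambda(H_N)\ge 0$.

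\textbf{Main obstacle.} The hard part is the first step: justifying rigorously that the global ground state energy equals the minimum over segment decompositions. This needs the following.
\begin{itemize}
\item Improperly bounded configurations, such as $\ket{\bd\bd}$, a wrong arrow order, or a missing left boundary, must cost more than any total negative bonus. The total bonus is at most $\sum_i \mu 2^{-f(w_i)}$, and that sum must be shown to be dominated by the constant penalties of $P$ and $P'$.
\item The computational penalty and the marker bonus must be consistent, meaning $\mu$ is small compared with the $-4$ and $+2$ terms. For this I would invoke the block-diagonal structure of the Hamiltonian in the classical boundary configuration, together with the Laplacian analysis behind Theorem~\ref{th:marking_hamiltonian}.
\item Edge segments at the ends of the open chain must not gain extra energy. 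They have no $\ket{\bd}$ on one side, and the $-7/2$ shift in the construction is what handles this.
\end{itemize}
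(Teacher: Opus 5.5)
Your argument follows the paper's route. Commutation of $H_{TM}$ and $P'''$ with $H_f$ reduces $\lambda(H_N)$ to a sum of segment energies, Theorem~\ref{th:theorem20} fixes the sign of each segment, and in the halting case you tile with $w_{halt}$-segments. Two of your choices are tidier than the paper's.

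\textbf{Lower bounds.} You only use that every segment shorter than $w_{halt}$ (hence every segment, if $M(\eta)$ never halts) has $\lambda(w)>0$. The paper instead argues by monotonicity that a single segment is optimal, which is not needed. Your sentence invoking Theorem~\ref{th:combined_on_chain} itself for the non-halting case is circular as written; it is this positivity that actually carries the argument.

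\textbf{Upper bound.} You absorb the remainder into the last segment, so all segments have length at least $w_{halt}$ and contribute non-positively. This gives a genuine trial-state bound for every $N$ large enough to fit one $w_{halt}$-segment. Also, $\lfloor N/(w_{halt}+1)\rfloor \ge \tfrac14\lfloor N/w_{halt}\rfloor$ for $N>w_{halt}$, so your count matches the statement's form. The paper instead leaves a possibly short right-most segment with positive energy, and only argues that it is outweighed asymptotically in $N$.

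The step that does not go through is your claim that Theorem~\ref{th:theorem20} turns $\mu\,4^{-f(w_{halt})}$ into $\Omega(1/T_{halt}^3)$. The paper's proof makes the same claim. Equation~\eqref{eq:marking_energy_bound} bounds the marker bonus from \emph{above}: $-\lambda_f\le \mu\,2^{-f(w)}<1/T^3$. In fact $f$ is chosen precisely so that this bonus is negligible against the computational penalty. With $f(w)=2^w$ and $T_{halt}\le Q\,w_{halt}A^{w_{halt}}$, one gets $\mu\,4^{-f(w_{halt})}\,T_{halt}^3\to 0$ as $w_{halt}\to\infty$. So the per-segment bonus is $o(1/T_{halt}^3)$, not $\Omega(1/T_{halt}^3)$.

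What your construction actually proves is $\lambda(H_N)\le -\lfloor N/(w_{halt}+1)\rfloor\,\mu\,4^{-f(w_{halt})}$. This is a fixed negative amount per segment, independent of $N$, and that is all Lemma~\ref{lem:lemma23} needs to get $\lambda_0(H_N)\to-\infty$. You should therefore state the halting bound in this form rather than with $\Omega(1/T_{halt}^3)$.
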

\begin{proof}
Due to the fact that $H_{TM}$ and $P'''$ both commute with $H_f$, the minimum energy of $H_N$ is the sum of the energies of all the induced segments. There are two cases, depending on the halting behavior:
\begin{enumerate}
    \item The computation does not halt. In that case, every time it reaches a boundary, it will be penalized. Additionally, the contribution of a segment in that case falls monotonically with segment length. Therefore, the lowest energy possible is achieved by having only one segment over the chain, which has a non-negative energy.
    \item The computation halts after $T_{halt}$ time, having consumed $w_{halt}$ tape.
    \begin{itemize}
        \item If $N < w_{halt}$, the same argument as above holds, resulting in a positive energy.
        \item Else, as seen in \ref{th:theorem20}, the best energy bonus over a segment is achieved at exactly the size needed for halting, contributing $-\Omega(1/T^3)$, as per Equation \ref{eq:marking_energy_bound}. It is beneficial then to have as much segments of size $w_{halt}$ as possible, namely, $\lfloor N/w_{halt} \rfloor$ of them, summing the maximum possible bonuses, contributing a total of $-\lfloor N/w_{halt} \rfloor\Omega(1/T^3)$. One could have a remaining right-most segment that is shorter than the halting size, but this adds a single constant energy penalty, over $O(N)$ segments that induce a bonus, so the asymptotic bound remains.
    \end{itemize}
\end{enumerate}
\end{proof}

\subsection{Undecidability results}

The final part of the result in \cite{Bausch_2020} is to first shift the energy of $H_N$, and then make use of a a gapped and a gapless Hamiltonian. When combined, the undecidable behavior of the spectral gap appears. We finish this section by recalling these final steps in the construction.

\begin{lemma}[\cite{Bausch_2020}, Lemma 23]\label{lem:lemma23}
By adding at most two-local identity terms, we can shift the energy of $H_N$ from Theorem \ref{th:combined_on_chain} such that
\[
    \lambda_0(H_N)    \begin{cases}
                            \ge 1 & \text{if the TM does not halt,} \\
                            \longrightarrow-\infty & \text{in the halting case.}
                        \end{cases}
\]
\end{lemma}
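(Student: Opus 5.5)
We shift the energy of $H_N$ from Theorem \ref{th:combined_on_chain} by adding identity terms, so that the non-halting case is lifted to ground energy at least $1$ while the halting case is driven to $-\infty$. Because Theorem \ref{th:combined_on_chain} is stated only up to constants, we first make those constants concrete.

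Let $\mathbb{1}^{(1)}$ be the identity on a single site and $\mathbb{1}^{(2)}$ the identity on two neighbouring sites. Adding $a\,\mathbb{1}^{(1)}$ to $h^{(1)}$ and $b\,\mathbb{1}^{(2)}$ to $h^{(2)}$ turns $H_N$ into $H_N + aN + b(N-1)$. This is translation invariant, commutes with everything, and shifts every eigenvalue uniformly, so each case only requires checking where the ground energy lands.

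\textbf{Non-halting case.} Theorem \ref{th:combined_on_chain} gives $\lambda(H_N)\ge 0$ for all $N$. With $a=1$, $b=0$ we get $\lambda_0 \ge N \ge 1$.

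\textbf{Halting case.} Theorem \ref{th:combined_on_chain} gives $\lambda(H_N) < -\lfloor N/w_{halt}\rfloor\, c/T_{halt}^3$ for $N>w_{halt}$, so the shifted energy is at most $N\bigl(a - c/(w_{halt}T_{halt}^3)\bigr) + O(1)$. This is exactly the obstacle. A fixed shift $a=1$ is independent of $\eta$, while the per-site bonus $c/(w_{halt}T_{halt}^3)$ can be arbitrarily small, so the shift cannot simply be taken larger than the bonus. The shift has to track the per-site bonus.

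I would resolve this by making the constant shift a genuine energy offset that belongs to the segment structure itself. The marker terms $P'$ and $P''$ already carry identity-like constants attached to each boundary $\ket{\bd}$, which is what ``at most two-local identity terms'' suggests. Before the identity term is added, the remaining leftmost-boundary constant $-7/2$ and the per-boundary terms should be arranged so that the only extensive contribution comes from segments.

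Concretely, I would take the shift to consist of $+1$ acting only when a boundary-adjacent configuration appears, arranged so that a configuration with a single segment (the non-halting optimum) picks up exactly one such term. In the halting case, many segments of size $w_{halt}$ each contribute $-\Omega(1/T_{halt}^3)$ on top of $O(1)$ local constants. Choosing the boundary constants in $P''$ so that these exactly cancel, the total is $1 - \lfloor N/w_{halt}\rfloor\,\Omega(1/T_{halt}^3)\to-\infty$.

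The key verification is that the cancellation of the per-boundary constants is exact. The chain of $-4$ bonus, $+2$ adjacency penalties and the $\pm 1/2$ terms analysed in Section \ref{subsec:marking} must leave net zero per segment apart from the segment eigenvalue $\lambda_i$. The non-halting bound then follows from Theorem \ref{th:theorem20}, which gives every segment positive energy. The halting divergence follows from counting $\lfloor N/w_{halt}\rfloor$ segments. A possible leftover short rightmost segment contributes only $O(1)$, so it does not affect the limit.
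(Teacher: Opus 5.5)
There is a genuine gap. You set up the right family of shifts, $H_N + aN + b(N-1)$, but you only tried $b=0$, and then abandoned identity terms for a mechanism that does not work. The missing idea is the Gottesman--Irani boundary trick that the paper invokes: take $a=1$, $b=-1$. That is, add $+1$ times the identity to every one-site term and $-1$ times the identity to every nearest-neighbour term. There are $N$ one-local terms but only $N-1$ two-local terms, so the net shift is exactly $+1$, independent of both $N$ and $\eta$. Theorem~\ref{th:combined_on_chain} then gives both cases directly:
\begin{itemize}
\item in the non-halting case, $\lambda_0 \ge 0+1 = 1$;
\item in the halting case, $\lambda_0 < 1 - \lfloor N/w_{halt}\rfloor\,\Omega(1/T_{halt}^3) \to -\infty$.
\end{itemize}
The obstacle you identified, an extensive shift $aN$ outrunning a per-site bonus that can be arbitrarily small, is self-inflicted. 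The shift never needs to be extensive, so nothing has to ``track'' the bonus.

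Your proposed repair does not close the gap, for three reasons.
\begin{itemize}
\item A term ``$+1$ acting only when a boundary-adjacent configuration appears'' is a projector onto certain configurations, not an identity term. It therefore falls outside what the lemma permits.
\item Being translation invariant, such a term cannot single out one boundary. In a configuration with $\lfloor N/w_{halt}\rfloor$ segments it fires once per segment. That adds roughly $+1$ per segment, which swamps the $-\Omega(1/T_{halt}^3)$ per-segment bonus and destroys the divergence you need.
\item Compensating by re-tuning the constants in $P''$ changes the marker Hamiltonian itself, not just its energy offset. That is exactly the fine-tuned cancellation Section~\ref{sec:1d_perturbed} shows to be fragile. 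Moreover, the ``exact cancellation'' you defer to the key verification is asserted rather than demonstrated.
\end{itemize}
The only mechanism that yields a single leftover constant from translation-invariant local terms is the $N$ versus $N-1$ mismatch. Applied directly to identity terms, it proves the lemma in one line.
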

\begin{proof}
Employ \textcite{GI10}'s boundary trick, as done when constructing $P'$ in \ref{th:marking_hamiltonian}, which relies on the fact that there is $N$ one-local but only $N-1$ two-local terms.
\end{proof}

\begin{lemma}\label{lem:haux}
The Hamiltonians $H_0 = \sum_{i=1}^N \ket{1}\bra{1}_i$ and $H_d = \sum_{i=1}^N \dfrac{1}{2} (\sigma^x_i\sigma^x_{i+1} +\sigma^y_i\sigma^y_{i+1})$ over $(\mathbb{C}^2)^{\otimes{N}}$ have the following properties:
    \begin{itemize}
        \item $H_0$ is 1-local, translationally-invariant, diagonal in the computational basis, has unique zero-energy ground state $\ket{00\dots0}$, and all other $\lambda \in \text{spec}(H_0)$ satisfy $\lambda \geq 1$.
        \item $H_d$ is 2-local, translationally-invariant, and has continuous spectrum in $[0, \infty)$ in the thermodynamic limit.
    \end{itemize}
\end{lemma}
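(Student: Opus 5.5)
We need to prove Lemma \ref{lem:haux}: $H_0$ and $H_d$ properties. The proof splits into the two items, and for the spectral content of $H_d$ we diagonalize it exactly through a free-fermion reduction.

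\textbf{The operator $H_0$.} Locality, translation invariance and diagonality are read off directly: $H_0$ is a sum of identical on-site projectors $\ket{1}\!\bra{1}_i$, each diagonal in the computational basis. On a computational basis state $\ket{x_1\dots x_N}$ the operator $H_0$ acts as multiplication by the Hamming weight $\sum_i x_i$. Hence its spectrum is $\{0,1,\dots,N\}$. The eigenvalue $0$ is attained only at $x=0\cdots0$, so $\ket{00\dots0}$ is the unique ground state and every other eigenvalue is at least $1$. This part is routine.

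\textbf{The operator $H_d$.} Locality and translation invariance are again immediate; we read the sum with periodic boundary conditions, or up to $N-1$ in the open chain, which does not affect the thermodynamic limit. Writing $\frac12(\sigma^x\sigma^x+\sigma^y\sigma^y)=\sigma^+\sigma^-+\sigma^-\sigma^+$ gives the XX chain.
\begin{enumerate}
\item Apply the Jordan--Wigner transformation $c_i=\big(\prod_{j<i}\sigma^z_j\big)\sigma^-_i$. This maps $H_d$ to the quadratic hopping Hamiltonian $\sum_i (c_i^\dagger c_{i+1}+c_{i+1}^\dagger c_i)$, with a parity-dependent boundary twist in the periodic case.
\item Diagonalize by Fourier transform: single-particle energies are $\epsilon_k=2\cos k$, with $k$ ranging over an $N$-dependent grid of spacing $O(1/N)$ (e.g.\ $k=\pi m/(N+1)$ in the open chain).
\item The many-body spectrum consists of all sums $\sum_{k\in S}\epsilon_k$ over subsets $S$ of modes. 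The ground state fills all negative modes, and excitations add $|\epsilon_k|$ for a particle or hole near the Fermi points $k=\pm\pi/2$.
\end{enumerate}

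\textbf{The gapless claim.} The main step is to show that for every $c>0$ the sums of $|\epsilon_k|$ become $\varepsilon$-dense in $[\lambda_0,\lambda_0+c]$ as $N\to\infty$. Single-particle excitations already do this: since $\epsilon_k$ is continuous in $k$ and the grid spacing is $O(1/N)$, the values $|\epsilon_k|$ fill $[0,2]$ with spacing $O(1/N)$. Taking $c=2$, every point of $[\lambda_0,\lambda_0+2]$ is therefore within $O(1/N)$ of the spectrum, giving gaplessness in the sense of Definition \ref{def:gs_energy_definitions}.

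\textbf{The claim $[0,\infty)$.} This is the one point needing interpretation, since $H_d$ is not positive as written. The statement should be understood after the constant shift by $-\lambda_0(H_d)$, i.e.\ the excitation spectrum above the ground state is continuous from $0$ upward in the thermodynamic limit. I expect the main care to go into two places: the boundary twist of Jordan--Wigner, which only shifts momenta by $O(1/N)$ and so does not affect density, and stating the normalization convention cleanly. Alternatively, one can cite the standard Lieb--Schultz--Mattis solution for this part.
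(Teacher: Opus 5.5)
Your proposal is correct and follows the paper's route. You treat $H_0$ by the Hamming-weight observation (spectrum $\{0,1,\dots,N\}$), exactly as the paper does, and for $H_d$ you carry out the Jordan--Wigner/free-fermion diagonalization of the critical XX chain that the paper leaves to the Lieb--Schultz--Mattis citation. Reading ``$[0,\infty)$'' as the excitation spectrum after subtracting $\lambda_0(H_d)$ is right; to cover every fixed interval $[0,E]$ rather than only $[0,2]$, use your step 3: excitation energies are all sums of $|\epsilon_k|$ over distinct modes, so several near-band-edge modes (each $\approx 2$) plus one tunable mode near the Fermi points reach any target up to an error that vanishes as $N\to\infty$.
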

\begin{proof}
The first one is the trivial Hamiltonian, with spectrum $\{0,1,2,\dots, N\}$. The second one is the 1-dimensional critical XY model, found and solved in \cite{lieb1961two}.
\end{proof}

\begin{theorem}[\cite{Bausch_2020}, Theorems 1 and 25]\label{th:undecidable-2}
Take an Universal Turing Machine (UTM). There exist (explicitly constructible) local interactions $h^{(1)}(\eta)$ and nearest-neighbor interactions $h^{(2)}(\eta)$, parametrized by an integer $\eta$, such that the family of Hamiltonians $\{H_N(\eta)\}_N$ defined on a spin chain with $N$ sites and local dimension $d$ by \[H^u_N(\eta) = \sum_{i=1}^Nh_i^{(1)}(\eta) + \sum_{i=1}^{N-1}h_{i,i+1}^{(2)}(\eta)\] satisfies the following:
    \begin{enumerate}
        \item If the UTM halts on input $\eta$, then $\{H^u_N(\eta)\}$ is gapless.
        \item If the UTM does not halt on input $\eta$, then $\{H^u_N(\eta)\}$ has a gap of at least $1$ for all $N \in \mathbb{N}$.
    \end{enumerate}
\end{theorem}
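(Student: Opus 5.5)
The plan is to combine the shifted Hamiltonian $H_N$ of Lemma \ref{lem:lemma23} with the two auxiliary Hamiltonians $H_0$ and $H_d$ of Lemma \ref{lem:haux} on an enlarged local Hilbert space, following the coupling construction of \cite{Cubitt2015,CPW22}. Write $\mathcal{H} = \mathcal{H}_{u}\otimes\mathcal{H}_{aux}$ with $\mathcal{H}_{u} = \mathbb{C}^{d}\oplus\ket{0}$ and $\mathcal{H}_{aux}=\mathbb{C}^2$. Define
\[
H^u_N = H_N\otimes \mathbb{1} + \mathbb{1}\otimes H_d + \sum_{i} h^{(2)}_{\mathrm{coup}, i,i+1}.
\]
Here $H_N$ is extended by zero on $\ket{0}$. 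The coupling term penalizes, by energy $1$, neighbouring pairs in which one site carries $\ket{0}$ and the other carries a state in $\mathbb{C}^d$. We also add the one-body term $\ket{0}\!\bra{0}\otimes\ket{1}\!\bra{1}$, which forces $H_0$ on the $\ket{0}$-sector, and the one-body term $\Pi_{\mathbb{C}^d}\otimes\ket{1}\!\bra{1}$ to suppress $H_d$ there. By construction all terms are translationally invariant and at most nearest-neighbour, and they are computable from $\eta$ because $H_N$ is (Theorem \ref{th:combined_on_chain}).

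The first step is to show that $H^u_N$ is block diagonal with respect to the classical labels ``site carries $\ket{0}$'' versus ``site carries $\mathbb{C}^d$'', since all coupling terms are diagonal in that label. On each block the spectrum then splits as a sum:
\begin{enumerate}
\item on the all-$\ket{0}$ block, the spectrum is that of $H_d$ (shifted so that its ground energy is $\le 0$);
\item on the all-$\mathbb{C}^d$ block, the spectrum is $\operatorname{spec}(H_N)+\operatorname{spec}(H_0)$, essentially $\operatorname{spec}(H_N)$;
\item mixed blocks pay at least $1$ for every domain wall.
\end{enumerate}
For mixed blocks, each domain contributes the ground energy of the corresponding sub-chain. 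By Lemma \ref{lem:lemma23} the $H_N$ domains contribute at least $1$ in the non-halting case, so mixed blocks then have energy at least $1$.

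The case analysis comes next. In the non-halting case, the ground state is the $H_d$-sector ground state, which I must arrange to be a unique product state. The honest way to do this is to use $H_0$ in place of $H_d$ in the trivial sector and to place the gapless $H_d$ in the halting sector. Following \cite{Bausch_2020}, I take the trivial part to be $H_0$ in the $\ket{0}$ sector and put $H_d$ tensored alongside the $\mathbb{C}^d$ sector, so that the $\mathbb{C}^d$ sector has spectrum $\operatorname{spec}(H_N)+\operatorname{spec}(H_d)$. Then:
\begin{enumerate}
\item if the machine does not halt, $\lambda_0(H_N)\ge 1$, the global ground state is $\ket{00\cdots0}$ with energy $0$, and every other state has energy at least $1$, giving a gap of at least $1$ for all $N$;
\item if the machine halts, $\lambda_0(H_N)\to-\infty$, so for large $N$ the ground state lies in the $\mathbb{C}^d$ sector, and the spectrum above it contains $\lambda_0(H_N)+\operatorname{spec}(H_d)$, which becomes dense in $[\lambda_0,\lambda_0+c]$ by Lemma \ref{lem:haux}, giving gaplessness.
\end{enumerate}

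The main obstacle is the mixed blocks in the halting case, together with the claim that the low-lying spectrum of $H_d$ survives intact. Configurations mixing the two sectors must not create a spurious lower or isolated ground state. The density statement, however, only needs the spectrum to contain $\lambda_0+\operatorname{spec}(H_d)$ near the bottom. This holds provided the true ground state lies in the pure $\mathbb{C}^d$ sector, which I will verify by using that $\lambda_0(H_N)$ scales like $-\Omega(N)$ (Theorem \ref{th:combined_on_chain}) while every domain wall costs positive energy.
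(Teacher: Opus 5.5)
Once you discard your first version, your Hamiltonian is the paper's. It puts $H_N\otimes\mathbbm{1}+\mathbbm{1}\otimes H_d$ on $\mathbb{C}^d\otimes\mathbb{C}^2$ and $H_0$ on $\ket{0}\otimes\mathbb{C}^2$, with a unit penalty on mixed neighbouring pairs, i.e.\ $(\mathcal{H}_C\otimes\mathcal{H}_d)\oplus\mathcal{H}_0$ with $H_g$. The opening paragraph and your block list must be deleted, because they describe a different and wrong Hamiltonian. There $\mathbbm{1}\otimes H_d$ acts on every auxiliary qubit, and adding the positive term $\Pi_{\mathbb{C}^d}\otimes\ket{1}\!\bra{1}$ does not suppress it. An all-$\ket{0}$ block carrying $H_d$ would also make every non-halting instance gapless. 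With the corrected assignment your non-halting argument is fine, under two provisos. First, every term of $H_N$, including the identity shifts of Lemma~\ref{lem:lemma23}, must be extended by zero off $\mathbb{C}^d$, so that a $\mathbb{C}^d$-domain of length $m$ carries exactly $H_m\otimes\mathbbm{1}+\mathbbm{1}\otimes H_d^{(m)}$. Second, you need $\lambda_0(H_d^{(m)})\ge 0$ for all $m$, not the ``$\le 0$'' normalization you mention.

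The genuine gap is the halting case. You plan to show the ground state lies in the pure $\mathbb{C}^d$ sector from ``$\lambda_0(H_N)=-\Omega(N)$ and walls cost energy'', and this does not suffice. A competing sector with $\mathbb{C}^d$-domains of lengths $m_1,\dots,m_k$ has energy at least (number of walls)$\,+\sum_i\bigl(\lambda_0(H_{m_i})+\lambda_0(H_d^{(m_i)})\bigr)$. Theorem~\ref{th:combined_on_chain} gives only an \emph{upper} bound on $\lambda_0(H_N)$, with the tiny slope $\Omega(1/(w_{halt}T_{halt}^3))$, and says nothing about sub-chain energies. Already the sector with one domain of length $N-1$ and a single $\ket{0}$ site pays for only one wall, so you need $\lambda_0(H_N)\lesssim\lambda_0(H_{N-1})+1$. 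Moreover, if some short sub-chains had anomalously low energy, a sector fragmented into bounded-length domains could undercut the pure one. The spectrum just above its ground energy comes from bounded-size pieces and need not be dense.

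What closes this is a matching \emph{lower} bound from the segment structure of Theorems~\ref{th:marking_hamiltonian}--\ref{th:combined_on_chain}. Write $\delta(w)\in[\mu/4^{f(w)},\mu/2^{f(w)}]$ for the bonus of a $w$-segment, and $n(m)$ for the maximal number of disjoint $w_{halt}$-segments that fit in $m$ sites. Only segments of length at least $w_{halt}$ can be negative, and none lies below $-\delta(w_{halt})$, so $\lambda_0(H_m)\ge 1-n(m)\,\delta(w_{halt})$. Tiling by $w_{halt}$-segments, extending the last one, gives $\lambda_0(H_N)\le 1-(n(N)-1)\,\delta(w_{halt})$. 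Since $\delta(w_{halt})\ll 1$, every wall costs more than the $H_N$ part can gain; here you also use $\lambda_0(H_d^{(m)})\ge0$ and $\lambda_0(H_d^{(N)})\to 0$. Hence for large $N$ the ground state is in the pure sector, and $\lambda_0+\operatorname{spec}(H_d^{(N)})$ yields gaplessness.

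The paper's own proof does not spell this out either; it only asserts that $H_g$ penalizes mixed support and defers to \cite{Bausch_2020}. Still, a bare wall penalty plus the $-\Omega(N)$ bound is not an argument.
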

\begin{proof}
Take $H_N$ from Theorem \ref{th:combined_on_chain}, after having shifted its energy according to Lemma \ref{lem:lemma23}, and $H_0$, $H_d$ from Lemma \ref{lem:haux}, acting on Hilbert spaces $\mathcal{H}_C$, $\mathcal{H}_d$ and $\mathcal{H}_0$ respectively. Construct the Hamiltonian \[H^u_N = H_N \otimes \mathbbm{1}_2 \oplus 0_3 + \mathbbm{1}_1\otimes H_d \oplus 0_3 + 0_{1,2}\oplus H_0 + H_g\] on $\mathcal{H} := (\mathcal{H}_C \otimes \mathcal{H}_d) \oplus H_0$, where $H_g := \sum_{i=1}^{N} (\mathbbm{1}_{1,2}^{(i)} \otimes \mathbbm{1}_{3}^{(i + 1)} + \mathbbm{1}_{3}^{(i)} \otimes \mathbbm{1}_{1,2}^{(i + 1)})$ ensures that any state with support on both $\mathcal{H}_c \otimes \mathcal{H}_d$ and $\mathcal{H}_0$ has an energy penalty.
\end{proof}

%% file: 3_Perturbed.tex
\section{Decidability of the perturbed family}\label{sec:1d_perturbed}
Having reviewed the procedure used to construct the family with undecidable behavior, in this section we explain how, for every $\varepsilon > 0$, there exists a 1-local interaction of norm $O(\varepsilon)$ term that renders the problem decidable for the perturbed family.

\subsection{Perturbing the marker Hamiltonian}\label{sec:perturbation}
When constructing the marking Hamiltonian, the penalty of $1/2$ per $\ket{\bd}$ is tuned in order to cancel the the $-1/2$ that appears in the energy analysis of the interior part of the segment. The idea is to alter this fine tuning, by any given arbitrarily small perturbation of $\varepsilon >0$, and study how this translates to the bigger picture.

Therefore, we start by changing the term $P''$ to $P''_\varepsilon = 1/2+\varepsilon$. Then, the new marking Hamiltonian, $H_f^\varepsilon$, behaves, in  contrast to Theorem \ref{th:marking_hamiltonian}, as follows:

\begin{theorem}\label{th:marking_epsilon_hamiltonian}
The Hamiltonian $H^\varepsilon_f = H_{walk} + P+ P' + P''_\varepsilon + B$ is 1-dimensional, translationally invariant and nearest-neighbor, and:
\begin{enumerate}
    \item Its ground state configuration is a properly bounded configuration of multiple segments, separated by $\ket{\bd}$.
    \item If the formation has segments of different sizes $W = \{w_1, \dots, w_W\}$, then its ground state energy is \[|W|\varepsilon-\sum_{w_i \in W} 1/2^{f(w_i)} \leq \sum_i \lambda_i + 7/2 - \varepsilon \leq |W|\varepsilon -\sum_{w_i \in W} 1/4^{f(w_i)}\]
    \item It has a gap of size $\geq 1/2$.
\end{enumerate}
\end{theorem}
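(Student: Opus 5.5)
The plan is to reuse the proof of Theorem \ref{th:marking_hamiltonian} essentially verbatim. The point is that $P''_\varepsilon$ differs from $P''$ only by the 1-local term $\varepsilon\sum_i \ket{\bd}\!\bra{\bd}_i$. This term is diagonal in the computational basis and commutes with every term of $H_{walk}+P+P'+B$, because none of these terms creates or destroys a $\ket{\bd}$: the walk transitions only act on $\ket{\wa},\ket{\ba}$. Consequently $H^\varepsilon_f$ is block diagonal with respect to the classical configuration of boundary symbols, exactly as $H_f$ is, and on each block it equals the corresponding block of $H_f$ plus $\varepsilon$ times the number of boundaries. Locality, translation invariance and nearest-neighbor structure are immediate, since we only modified the coefficient of a 1-local projector.

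For item 2, I would fix a block with boundaries at given positions, producing $|W|$ interior segments plus the leftmost boundary, so $|W|+1$ copies of $\ket{\bd}$. The energy bookkeeping is then as in Section \ref{subsec:marking}. The term $P'$ contributes the $-4$ for a properly bounded configuration. Each segment of length $w_i$ contributes its walk-plus-bonus eigenvalue $\lambda_i\in(-1/2-1/2^{f(w_i)},-1/2-1/4^{f(w_i)})$, by \cite{Bausch_2020} Lemma 6 and Corollary 9 with the $f(w)$ counter of Section IV.E. Each boundary now contributes $1/2+\varepsilon$. Pairing every segment with its right boundary cancels the $\pm1/2$ and leaves $|W|\varepsilon$ together with the segment terms. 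The leftmost boundary leaves the constant $1/2+\varepsilon$, which together with $-4$ is the constant $-7/2+\varepsilon$ that I move to the middle of the displayed inequality. Adding the per-segment bounds then gives the stated two-sided estimate. I would be careful to note that $W$ is understood as a multiset of segment lengths, so $|W|$ counts segments.

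For items 1 and 3, I would reuse the penalty argument of \cite{Bausch_2020} Theorem 10. Improperly bounded configurations (a $\ket{\bd\bd}$, a $\ket{\wa\ba}$, a missing left boundary, and so on) pay at least a constant penalty from $P$ and $P'$ in addition to what the bounded analysis gives. Inside each segment the Laplacian structure leaves a unique negative eigenvalue with all others at least $1/2$ higher. Adding $\varepsilon$ per boundary shifts whole blocks uniformly, so it does not change the intra-block gap.

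The main obstacle is that the $\varepsilon$ term could in principle reorder blocks. A configuration with fewer boundaries, or even one containing no $\ket{\bd}$ at all, might now compete with properly bounded ones. I would first check that properly bounded blocks with at least one boundary still beat unbounded ones for $\varepsilon$ small compared with the $-4$ bonus and the constant penalties, say $\varepsilon<1/2$. I would then interpret item 1 as saying the ground state lies in some properly bounded sector, with the optimal choice of $W$ deferred to later analysis. For the gap claim I would state it within each fixed boundary configuration, or alternatively accept degeneracies across configurations as in the original theorem, so that it follows from the per-segment gap of at least $1/2$.
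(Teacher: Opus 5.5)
Your proposal is correct and takes essentially the same route as the paper. Both inherit the properly bounded structure from Theorem 10 of \cite{Bausch_2020}, charge $\varepsilon$ per boundary, cancel each segment's $-1/2$ against its right boundary so that $|W|\varepsilon$ remains plus the constant $-7/2+\varepsilon$ coming from $P'$ and the leftmost boundary, and take the gap of $1/2$ from the per-segment Laplacian analysis. Your extra steps sharpen that same argument rather than change it: the explicit block-diagonal observation, the check that sectors with fewer or no boundaries do not overtake properly bounded ones (this needs some upper bound on $\varepsilon$, a case the paper's local-move argument does not spell out), and reading item 3 as an intra-sector gap.
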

\begin{proof}
By Theorem 10 of \cite{Bausch_2020}, having no double boundaries and both ends properly marked by $\bd$ is energetically favorable:
\begin{itemize}
    \item Having a double boundary $\ket{\bd\bd}$ is penalized with $+2$ by $P$. Deleting one of the boundaries results in a configuration with lower energy, due to removing the contribution from $P$. However, a longer segment has been created, but as its contribution is comprised in $(-1/2-1/2^{f(w)}, -1/2-1/4^{f(w)})$, the bonus is consequently smaller. However, the difference is no more than $1/2$, and therefore is still better to avoid the $+2$ penalty from $P$.
    \item Having no bounded ends is not energetically favorable either. Given $P'$, moving an inner boundary to an edge gives an extra $-2$ energy, because the $\ket{\bd}$ has now one less neighbor. Again, the decrease of the bonus due to the larger segment is no more than $1/2$, so the bounded ends configuration is more favorable.
\end{itemize}
Therefore, the configuration of the chain is described by a properly bounded chain into segments with no double boundaries. As every segment is terminated by a boundary, the penalty of $1/2+\varepsilon$ offsets the $-1/2$ of its contribution, but a $\varepsilon$ per segment remains. Additionally, the properly bounded configuration given by $P'$ has a $-4$ energy contribution. In the original construction, the final Hamiltonian was already lifted by a global shift of $7/2$, in order to compensate these constants (see Section \ref{subsec:marking}). However, now an additional $\varepsilon$ remains from the leftmost boundary, so we include an additional global shift of $\varepsilon$ in order to have our energy discussion centered around $0$ as well. The gap claim follows from the spectral gap of the original $H_f$, due to Lemma $5$ and Corollary $9$ in \cite{Bausch_2020}.
\end{proof}

\subsection{Energy analysis}
We learned, in section \ref{sec:perturbation}, how the modified marker Hamiltonian $H^\varepsilon_f$ will still induce partitions in the chain. However, how will the energy behave when combined with the computational Hamiltonian $H_{TM}$?

Let's start by contrasting with Theorem \ref{th:theorem20}, about what happens in a single segment, of size $w$.
\begin{theorem}\label{th:perturbed_energy}
The Hamiltonian $H^{\varepsilon}_w = \mu H^\varepsilon_f + H_{TM} + P'''$, over a segment of size $w$, has a ground state energy of:
 \[ \lambda_\varepsilon(w) = \lambda_{TM}(w) + \mu\varepsilon - \delta(w), \;\;\;\; \text{with} \; \begin{cases}
                            \bullet \;\; \delta(w) \in \bigg[\dfrac{\mu}{4^{f(w)}}, \dfrac{\mu}{2^{f(w)}}\bigg] \\[9pt]
                            \bullet \;\; |\delta(w)| < 1/T^3 \\[5pt]
                            \bullet \;\; \lambda_{TM} \in \Theta(1/T^2)
                        \end{cases}
\]
and presents the following energy behavior:
\begin{itemize}
    \item If the TM does not halt, halts but not before $w$, or has a truncated input, the ground state energy is positive.
    \item If the TM halts at size $w_{halt} < w$, but $\delta(w_{halt})< \mu\varepsilon$, then the ground state energy is positive too.
    \item If the TM halts at size $w_{halt} \leq w$, and $\delta(w_{halt}) \geq \mu\varepsilon$, only then the ground state energy is negative.
\end{itemize}
\end{theorem}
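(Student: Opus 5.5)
The proof follows the template of Theorem \ref{th:theorem20}, with the single-segment marker bound of Theorem \ref{th:marking_hamiltonian} replaced by that of Theorem \ref{th:marking_epsilon_hamiltonian}.

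\textbf{Step 1: additive decomposition.} I first use that $H_{TM}$ and $P'''$ commute with $\mu H^\varepsilon_f$, exactly as in the proof of Theorem \ref{th:theorem20}. Then the ground state energy on a single $\ket{\bd}$-bounded segment of length $w$ splits additively as $\lambda_\varepsilon(w) = \lambda_{TM}(w) + \mu\lambda^\varepsilon_f(w)$.

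\textbf{Step 2: the marker contribution.} Specialising Theorem \ref{th:marking_epsilon_hamiltonian} to one segment, that is $|W|=1$ after the global shift, gives $\mu\lambda^\varepsilon_f(w) = \mu\varepsilon - \delta(w)$ with $\delta(w)\in[\mu/4^{f(w)},\mu/2^{f(w)}]$. The bound $|\delta(w)|<1/T^3$ is then the chain of inequalities \eqref{eq:marking_energy_bound}, which uses only $2^{f(w)}>A^{3w}>T_{max}^3(w)$ and is unaffected by $\varepsilon$. For $\lambda_{TM}$ I quote, as in Theorem \ref{th:theorem20}, the history-state estimate of \cite{BC18}: it gives $\Theta(1/T^2)\ge 1/T^3$ whenever a penalty is picked up, and $0$ when the machine halts within the segment.

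\textbf{Step 3: case analysis.}
\begin{enumerate}
\item If the TM does not halt, halts only beyond $w$, or the QPE input is truncated, then $\lambda_{TM}\ge 1/T^3>\delta(w)$. Hence $\lambda_\varepsilon(w) > \mu\varepsilon>0$, and the extra $\mu\varepsilon$ only strengthens the conclusion.
\item If the TM halts within the segment, then $\lambda_{TM}=0$ and $\lambda_\varepsilon(w)=\mu\varepsilon-\delta(w)$. The sign is therefore decided by comparing $\delta$ with $\mu\varepsilon$.
\end{enumerate}

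\textbf{The main obstacle} is relating $\delta(w)$ to $\delta(w_{halt})$, as the second and third bullets are phrased. I plan to use that $\delta$ is decreasing in $w$ for $w\ge w_{halt}$. The marker ground energy on a segment is the unique negative eigenvalue of the path-graph Laplacian plus bonus, and this can be taken monotone in the path length $f(w)$ via the interval bounds. Where the interval bounds overlap, I would fall back on the argument in \cite{Bausch_2020}, Section IV.D, that the segment energy is increasing in $w$.

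With monotonicity in hand, the second and third bullets follow:
\begin{itemize}
\item If $w_{halt}<w$ and $\delta(w_{halt})<\mu\varepsilon$, then $\delta(w)\le\delta(w_{halt})<\mu\varepsilon$, so $\lambda_\varepsilon(w)>0$.
\item If $w_{halt}\le w$ and $\delta(w_{halt})\ge\mu\varepsilon$, negativity is immediate at $w=w_{halt}$.
\end{itemize}
For $w>w_{halt}$ in the third bullet, I would interpret the statement as concerning the optimal segment, i.e.\ $w=w_{halt}$, because $\delta$ only decreases with $w$. This is where I expect care, or a slight restatement, to be needed. Strictness at equality $\delta=\mu\varepsilon$ would likewise be handled by noting that the statement's "negative" should read "non-positive" in that edge case, or by perturbing $\varepsilon$.
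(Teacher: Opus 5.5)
Your proposal is correct and follows essentially the same route as the paper's proof: additivity from commuting terms, the single-segment marker bound of Theorem \ref{th:marking_epsilon_hamiltonian}, the comparison $\lambda_{TM}\ge 1/T^3>\delta(w)$ for the non-halting, late-halting and truncated cases, and monotonicity of $\delta$ for the halting case. That monotonicity in fact follows from the interval bounds alone, since $2^{f(w+1)}=4^{f(w)}$ for $f(w)=2^w$, so you do not need the fallback; and your restricted reading of the third bullet is exactly what the paper's proof establishes, since it only concludes negativity for sizes $w\in[w_{halt},k]$ with $\delta(w)\ge\mu\varepsilon$ and glosses over the equality case you correctly flag.
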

\begin{proof}
We know, by Theorem \ref{th:marking_epsilon_hamiltonian}, over a single segment, the minimal energy is the sum of both the computational and the marking energy, that can be written as above. Then, again, we focus on the different cases in halting and non halting instances:
\begin{enumerate}
    \item The TM does not halt. Then, for all sizes $w$, it will have a $\lambda_{TM}(w) \in \Theta(1/T^2)$. We need to split now into further cases for the marking energy.
        \begin{enumerate}
            \item If $\mu\varepsilon > \delta(w)$ for all sizes $w\in\mathbb{N}$, then, the energy coming from the marking Hamiltonian, $\mu\varepsilon - \delta(w)$ is always positive as well, rendering the final energy positive as well.
            \item If we are not in the previous case, this means that $\delta(w) \geq \mu\varepsilon > 0$ for some $w$. As $\mu$ and $\varepsilon$ are two positive constants, and $\delta(w)$ is monotone decreasing, this means that the condition only happens for the first $k$ sizes. That is, $w\in [1, \dots, k]$. Then, \[0 \geq \mu\varepsilon - \delta(w) > -\delta(w) > - \mu/2^{f(w)},\] which means that the marking contribution is negative. However, as explained in Theorem \ref{th:theorem20}, the falloff exponent $f(w)$ is chosen accordingly in order to not alter the positive sign given by the computational penalty, with the goal of making the computational behavior lead the construction. Therefore, the energy will remain positive.

            Lastly, for all sizes greater than $k$, we are back in case 1(a), and $\lambda_\varepsilon > 0$ as well.
        \end{enumerate}
    \item The TM halts at size $w_{halt}$. Then, $\lambda_{TM} = 0$ for all $w \geq w_{halt}$. Again, we need to check the balance with the marking contribution.
        \begin{enumerate}
            \item If $w_{halt} > w$, the behavior in the chain is the same as being non halting, so the energy analysis is the same as in case 1. 
            \item If $w_{halt} \leq w$, we have to check the same sign distinction:
            \begin{enumerate}
                \item If $\mu\varepsilon > \delta(w)$ for all sizes, then again, the marking energy is always positive, and thus $\lambda_\varepsilon = \lambda_{TM} + \lambda_f > 0$, as $\lambda_{TM} \geq 0$.
                \item Let again be $[1,\dots, k]$ the sizes where the marking energy would be negative, as before. For sizes smaller than $w_{halt}$, we can go back to case 1.b, as we are not halting in there.
                
                The remaining situation is when we have sizes in the interval that are big enough to halt. Here, now the marking bonuses are negative, and monotone increasing, but never surpassing $0$, being the best bonus the one achieved by a segment of size exactly $w = w_{halt}$. Thus, for every $w \geq w_{halt}$, we have that $\lambda_\varepsilon = \lambda_f < 0$.
            \end{enumerate}
        \end{enumerate}
    \item As explained in Theorem \ref{th:theorem20}, if the input was truncated, the behavior will be the same as a non-halting instance over that segment size, so energy analysis corresponds to case 1 as well.
    \end{enumerate}
\end{proof}
Theorem \ref{th:perturbed_energy} still has a distinction over halting and non halting instances. However, it is dependent on a critical size: where the marking energy contribution $\lambda_f$ switches from being negative to being positive. As this critical size is only dependent on constants $\mu$ and $\varepsilon$, and not on the halting problem, it is computable. We define it as follows:

\begin{definition}
For all positive constants $\mu, \varepsilon$, there exists another computable constant $w_\varepsilon \in \mathbb{N}$ such that $\mu\varepsilon - \delta(w) > 0$ for all $w \geq w_\varepsilon$. We call $w_\varepsilon$ the \textit{critical size}.
\end{definition}

With this, we can construct the final technical theorem of this section, which informally tells us that we only need to check up to the value of the energy in this critical size, in order to be able to determine the value in the thermodynamic limit.

\begin{theorem}\label{th:decidable_gs}
For any Turing Machine $M$ and input $\eta \in \mathbb{N}$, we can construct a $1$D, translationally invariant, nearest-neighbor family of Hamiltonians $\{H_N\}_{N=1}^{w_\varepsilon}$ over chain sizes $\{1, 2, \dots, w_\varepsilon\}$, such that their ground state energies $\lambda^N_0 \not = 0$. Then:
    \begin{enumerate}
        \item If $\lambda^N_0 > 0$ for all $N \in [1, \dots, w_\varepsilon]$, then we have $\lim_{N \to \infty}\lambda^N_0 > 0$.
        \item If $\lambda^N_0 < 0$, for some $N \in [1, \dots, w_\varepsilon]$, then we have $\lim_{N \to \infty}\lambda^N_0 < 0$.
    \end{enumerate}
\end{theorem}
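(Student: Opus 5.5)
The plan is to take $H_N$ to be the chain Hamiltonian built from the single-segment Hamiltonian $H^\varepsilon_w = \mu H^\varepsilon_f + H_{TM} + P'''$ of Theorem~\ref{th:perturbed_energy}, in the same way Theorem~\ref{th:combined_on_chain} is built from Theorem~\ref{th:theorem20}. Since $H_{TM}$ and $P'''$ commute with $H^\varepsilon_f$, and by Theorem~\ref{th:marking_epsilon_hamiltonian} the ground state is a properly bounded segmentation, I will write
\[
\lambda^N_0 = \min_{w_1+\dots+w_k + (k+1) = N} \sum_{j=1}^k \lambda_\varepsilon(w_j)
\]
(up to the fixed number of boundary sites, absorbed into the global shift). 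So $\lambda^N_0$ is determined by the single-segment function $\lambda_\varepsilon(w)$ alone. The non-vanishing $\lambda^N_0 \neq 0$ should follow because every $\lambda_\varepsilon(w)$ is nonzero. When $\lambda_{TM}>0$ this is the positivity part of Theorem~\ref{th:perturbed_energy}. In the halting case $\lambda_\varepsilon(w)=\mu\varepsilon-\delta(w)$, and I will choose $\varepsilon$ generically (e.g.\ avoiding the countably many values $\delta(w)/\mu$) so that this is never exactly zero.

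The key step is a dichotomy for the sign of $\lambda_\varepsilon(w)$ at $w\ge w_\varepsilon$. By the definition of the critical size, $\mu\varepsilon-\delta(w)>0$ for all $w\ge w_\varepsilon$. Since $\lambda_{TM}\ge 0$, this gives $\lambda_\varepsilon(w)\ge \mu\varepsilon-\delta(w)\ge \mu\varepsilon - \mu/2^{f(w_\varepsilon)} =: c_\varepsilon>0$, using monotonicity of $\delta$. Every segment of length at least $w_\varepsilon$ therefore costs at least a fixed positive amount. Negative segments can only have length below $w_\varepsilon$, and whether one exists is witnessed on chains of size at most $w_\varepsilon$ (one segment plus its boundary).

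Case 2 then follows. Suppose $\lambda^{N_1}_0<0$ for some $N_1\le w_\varepsilon$. The optimal segmentation at size $N_1$ contains some segment $w^*<w_\varepsilon$ with $\lambda_\varepsilon(w^*) =: -c<0$. For large $N$, I tile the chain with $\lfloor N/(w^*+1)\rfloor$ copies of this segment and absorb the remainder into one leftover segment, whose energy is bounded in absolute value by a constant. This gives $\lambda^N_0 \le -c\lfloor N/(w^*+1)\rfloor + O(1)\to-\infty$, so the limit is negative, exactly as in Theorem~\ref{th:combined_on_chain}.

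Case 1 is the main obstacle. Here every $\lambda^N_0>0$ for $N\le w_\varepsilon$, so every segment of length below $w_\varepsilon$ has positive energy. I need a uniform positive lower bound on $\lambda_\varepsilon(w)$ over all $w$, not just positivity; the bound for $w<w_\varepsilon$ is fine because there are finitely many values. For $w\ge w_\varepsilon$ the bound $c_\varepsilon$ above works. Then every segmentation of a chain of size $N$ has energy at least $\min(c_\varepsilon, \min_{w<w_\varepsilon}\lambda_\varepsilon(w))>0$, since there is at least one segment. This makes $\liminf_N \lambda^N_0>0$, which is the intended meaning of ``$\lim>0$''. The delicate point I expect to spend effort on is justifying that the segment decomposition really reduces $\lambda^N_0$ to this additive minimization, including the boundary and rightmost-segment bookkeeping and the global shifts, so that positivity at small $N$ transfers to each short segment.
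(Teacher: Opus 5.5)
Your proposal is correct in outline and follows the paper's skeleton: a threshold $w_\varepsilon$ past which every segment costs energy, a finite check below it, and amplification by repeated segments as in Theorem~\ref{th:combined_on_chain}. The key step, however, is organized differently.

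The paper reasons through the halting status. It reads positivity up to $w_\varepsilon$ as ``the machine never halted up to that point'', and takes halting with $w_{halt}<w_\varepsilon$ to force a negative energy at $w_{halt}$. This dictionary tacitly needs $w_\varepsilon$ to be the least threshold and $\delta$ to be monotone. The step from ``positive for each $N$'' to ``positive in the limit'' is also not argued.

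You never mention halting. You work only with the additive minimization over segmentations and the sign of $\lambda_\varepsilon(w)$. You extract the negative segment $w^*$ directly from an optimal configuration at size $N_1$. You also supply the uniform bound $\min(c_\varepsilon,\min_{w<w_\varepsilon}\lambda_\varepsilon(w))>0$ that case~1 actually requires. This bound is exactly where $\varepsilon$ enters, since without the perturbation non-halting segment energies tend to $0$. Your route is more robust and quantitative. The paper's route keeps the physical reading that the family is gapless exactly when the machine halts on a segment short enough for the marker bonus to beat $\mu\varepsilon$.

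Three details need repair when you write it out.

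(i) The definition of $w_\varepsilon$ only guarantees $\mu\varepsilon-\delta(w)>0$ for $w\ge w_\varepsilon$, so $\mu\varepsilon-\mu/2^{f(w_\varepsilon)}$ can be negative. Take instead $c_\varepsilon:=\inf_{w\ge w_\varepsilon}(\mu\varepsilon-\delta(w))$. This is positive because every term is positive and $\delta(w)\le\mu/2^{f(w)}\to0$. Alternatively, define $w_\varepsilon$ by $2^{-f(w_\varepsilon)}<\varepsilon$.

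(ii) With your own count $N=\sum_j w_j+k+1$, a lone segment of length $w$ occupies $w+2$ sites. A possibly negative segment of length $w_\varepsilon-1$ is therefore first witnessed at $N=w_\varepsilon+1$, outside the checked range. Either check one more size, or choose $w_\varepsilon$ one above the least threshold, which the definition allows.

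(iii) Nonvanishing of each $\lambda_\varepsilon(w)$ does not give $\lambda^N_0\neq0$, because positive and negative segments can cancel in the sum. Instead, note that on properly bounded configurations a block with $k\ge1$ segments has energy $k\mu\varepsilon$ plus an $\varepsilon$-independent term. Hence $\varepsilon\mapsto\lambda^N_0$ is strictly increasing, with at most one zero for each $N$ and $\eta$, and a generic $\varepsilon$ avoids this countable set. The paper simply asserts nonvanishing.

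Finally, your bound $\lambda^N_0\ge m>0$ for all $N$ is what Theorem~\ref{th:final_result} uses. If you want the literal limit in case~1: for large $N$ a single segment beats every segmentation with at least two segments. So $\lambda^N_0=\lambda_\varepsilon(N-2)\to\mu\varepsilon$.
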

\begin{proof}
Theorem \ref{th:perturbed_energy} shows us that halting later than the critical size still yields a positive energy, as if it were non halting. Therefore, we only need to be concerned with what happens up to $w_\varepsilon$. If all $\lambda^N_0$ are positive, this means that the TM never showed a halting behavior up to that point, so by the explained above, the energy will remain positive in the thermodynamic limit. On the other hand, if the TM is halting with halting time $w_{halt} < w_\varepsilon$, then at least at $w_{halt}$, the energy is negative. We do not need to be concerned with the truncation of the input, that happens if $w < |\eta|+5$. In the original construction \cite{Bausch_2020}, this was a problem because it could alter the later halting or non-halting behavior. However, in the perturbed scheme, after $w_\varepsilon$, these two instances are not distinguishable anymore, and thus the important threshold now is only dependent on $\varepsilon$.

The need of checking \textit{up to} $w_\varepsilon$ instead of \textit{just} $w_\varepsilon$ comes from the fact that the finite size behavior is dependent on the specific constants. The maximum bonus is given at size $w_\varepsilon$, but not all segments are proportional to it. The asymptotic behavior, as explained in Theorem \ref{th:combined_on_chain}, is making as many $w_{halt}$-segments as possible, where having at most a single right-most segment that does not overthrown the negative sign. However, over finite size, as in the original construction, it could be the case that the right most penalty of is over a (shorter) size, and therefore is not guaranteed to leave the energy negative. Thus the ground state configuration could be a different arrangement of segment of other sizes, and consequently of unknown sign. 
\end{proof}

\subsection{Decidable spectral gap}\label{sec:decidable}
The final original family of Hamiltonians $\{H_N^u\}$ of \ref{th:undecidable}, combines the construction with both a gapped and gapless Hamiltonian, as shown in Figure \ref{fig:gapped_vs_gapless}. We finish by showing in Theorem \ref{th:final_result} how the perturbation results in the combined final behavior being now decidable. This finishes proving the main result, as summarized in Section \ref{subsec:main_result}, Theorem \ref{th:main_result}.

\begin{figure}[hbtp]
\centering
  \begin{tabular}{cc}
    \includegraphics[width=0.4\columnwidth]{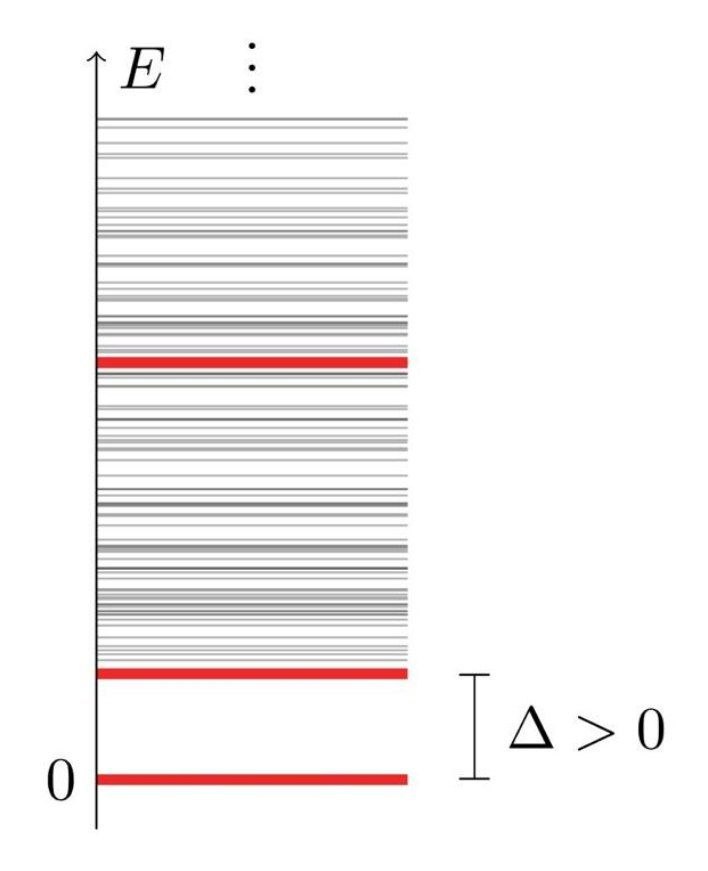} & \includegraphics[width=0.28\columnwidth]{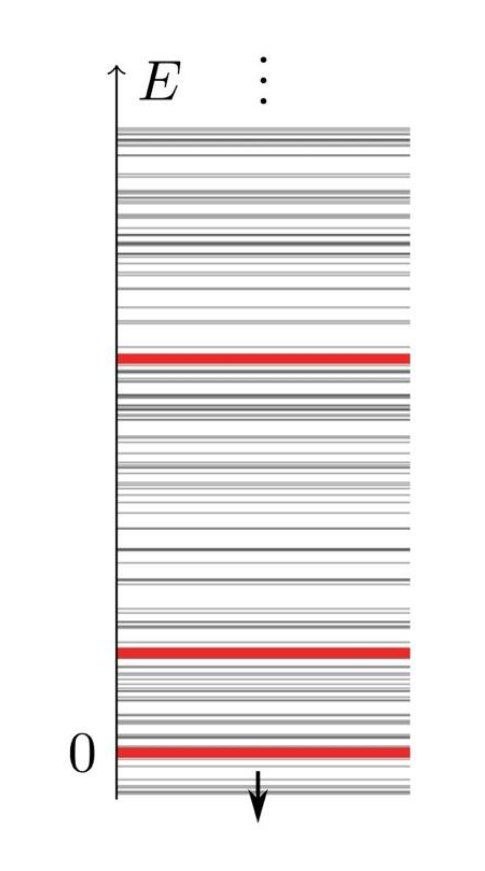}
  \end{tabular}
  \caption{(Taken from \cite{Bausch_2020}) Final spectrum of $H_N^{\varepsilon}$. The red lines represent the spectrum of $H_0$, and the gray ones, from $H_d$. On the left, we have the gapped situation, where the gap of $H_0$ shows. On the right, the gapless case.}
  \label{fig:gapped_vs_gapless}
\end{figure}

\begin{theorem}\label{th:final_result}
Take any $\varepsilon > 0$. The family of Hamiltonians $\{H_N^{\varepsilon}\}$, constructed as in Theorem \ref{th:undecidable}, but with the $\varepsilon$-perturbation in term $P''_\varepsilon$, no longer shows undecidable behavior. In fact, given critical size $w_\varepsilon$:
    \begin{enumerate}
        \item If $H_N^\varepsilon$ has $\lambda^N_{min} \geq 1$ for all $N \in [1, \dots, w_\varepsilon]$, then the family is gapped in the thermodynamic limit.
        \item If $H_N^\varepsilon$ has $\lambda^N_{min} < 0$ for some $N \in [1, \dots, w_\varepsilon]$, then the family is gapless in the thermodynamic limit.
    \end{enumerate}
\end{theorem}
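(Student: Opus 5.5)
The plan is to reproduce the final step of \cite{Bausch_2020} (the proof of Theorem \ref{th:undecidable-2}), but with the dichotomy of Theorem \ref{th:decidable_gs} taking the place of Lemma \ref{lem:lemma23}. Write $H_N^{\varepsilon}$ for the perturbed computational-marker Hamiltonian, built from $\mu H_f^{\varepsilon} + H_{TM} + P'''$ and shifted by the identity terms of Lemma \ref{lem:lemma23}. The full family is then
\[
H^{\varepsilon,u}_N = H_N^{\varepsilon}\otimes \mathbbm{1}_2 \oplus 0_3 + \mathbbm{1}_1\otimes H_d\oplus 0_3 + 0_{1,2}\oplus H_0 + H_g .
\]
Since $H_g$ penalizes any mixing of the two direct summands, the low-lying spectrum of $H^{\varepsilon,u}_N$ is the union of two pieces: $\operatorname{spec}(H_0)$, and $\operatorname{spec}(H_N^{\varepsilon}) + \operatorname{spec}(H_d)$, the latter coming from the commuting tensor-product terms. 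The whole question therefore reduces to where $\lambda_0(H_N^{\varepsilon})$ sits relative to $0$.

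\textbf{Step 1 (the shift preserves the dichotomy).} The identity shift of Lemma \ref{lem:lemma23} adds a fixed amount per site and per bond through the Gottesman--Irani boundary trick, so it acts as a monotone map on the energies. I will use it to turn the dichotomy of Theorem \ref{th:decidable_gs} into the following: either $\lambda_0(H_N^{\varepsilon})\ge 1$ for all $N$, or $\lambda_0(H_N^{\varepsilon})\to-\infty$. For the positive case I will argue as follows. Every segment has energy bounded below by $\min(\mu\varepsilon-\delta(w),\,c/T^2)>0$ uniformly in $w$, because $\mu\varepsilon-\delta(w)>0$ beyond $w_\varepsilon$ and there are only finitely many $w<w_\varepsilon$ to check, each with positive energy. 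Summing over segments, the per-segment positivity absorbs the constant shift, exactly as in \cite{Bausch_2020}. For the negative case, Theorem \ref{th:perturbed_energy} supplies a segment size $w_{halt}\le w_\varepsilon$ with strictly negative energy $-c_\varepsilon<0$. Tiling the chain with $\lfloor N/w_{halt}\rfloor$ such segments, plus one bounded leftover segment, then gives $\lambda_0\le -\lfloor N/w_{halt}\rfloor c_\varepsilon + O(1)\to-\infty$, following the proof of Theorem \ref{th:combined_on_chain}.

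\textbf{Step 2 (the two cases).} In case 1, every finite $N$ satisfies $\lambda_0(H_N^{\varepsilon})\ge1$ and $H_d\ge 0$, so the whole first summand lies above $1$. The spectrum below $1$ is then that of $H_0$: a unique ground state $\ket{0\dots0}$ at $0$ and a gap of at least $1$, so the family is gapped. In case 2, $\lambda_0(H_N^{\varepsilon})\to-\infty$ while $H_d$ has spectrum that becomes dense in $[0,\infty)$ (Lemma \ref{lem:haux}). Hence $\lambda_0(H_N^{\varepsilon})+\operatorname{spec}(H_d)$ is dense above the ground energy, and the family is gapless. Finally, the two cases are exhaustive and decidable: only finitely many sizes $N\le w_\varepsilon$ need to be examined, and $w_\varepsilon$ is computable from $\mu,\varepsilon$ alone.

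\textbf{Main obstacle.} The delicate point is Step 1 in the positive case. Theorem \ref{th:decidable_gs} states its conclusion only for $\lim_N\lambda_0^N$, whereas gappedness needs $\lambda_0\ge1$ uniformly for all $N$ and the statement asserts this with the threshold exactly $1$ rather than $0$. To handle this I will first try to show directly that the shift of Lemma \ref{lem:lemma23} maps \enquote{every segment has positive energy} to \enquote{total energy $\ge 1$}. This works because the shift is chosen so that an all-positive configuration costs at least $1$ in total, independently of the individual segment energies, as in the original non-halting case. I also need to check, via Theorem \ref{th:perturbed_energy}, that the segments of size below $w_\varepsilon$ with $\delta(w)\ge\mu\varepsilon$ cannot produce a negative total when there is no halting. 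This relies on the choice $f(w)=2^w$, which makes $\delta(w)<1/T^3$.
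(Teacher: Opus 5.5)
Your proposal is correct and takes the same route as the paper's proof: pass the finite-size dichotomy of Theorem~\ref{th:decidable_gs} through the identity shift of Lemma~\ref{lem:lemma23}, then couple to $H_0$ and $H_d$ as in the last step of \cite{Bausch_2020}; you are more explicit than the paper on why $\lambda_0\ge 1$ for every $N$ in case~1 (every segment energy is nonnegative) and why $\lambda_0\to-\infty$ in case~2 (tiling with a negative-energy segment). One caveat, which you share with the paper, is that with thresholds $1$ and $0$ the two hypotheses are not complementary (a very small negative segment energy can make $\lambda^N_{min}<1$ for some $N\le w_\varepsilon$ without making it negative for any), so your exhaustiveness remark should set case~1 against its negation, which your argument already covers because after the shift $\lambda^N_{min}<1$ forces a negative-energy segment.
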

\begin{proof}
Combine the family of Hamiltonians $\{H_N\}_{N=1}^{w_\varepsilon}$ constructed in Theorem \ref{th:decidable_gs}, apply the shift of Lemma \ref{lem:lemma23} and combine with $H_0$ and $H_d$ in Lemma \ref{lem:haux}, as in the final step of \cite{Bausch_2020}. The coupling with the dense spectrum Hamiltonian $H_d$ occurs at the lowest eigenvalue, so if the first eigenvalue of $H_N$ is negative, then the resulting Hamiltonian will grow to show the gapless spectrum. On the contrary, if this coupling occurs at $\lambda_{min}^N \geq 1$, the the gap with $H_0$, of ground state energy $0$, shows instead (see Figure \ref{fig:gapped_vs_gapless}). However, due to Theorem \ref{th:decidable_gs}, we can finitely check which situation holds.
\end{proof}

%% file: 4_Discussion.tex
\section{Discussion}
We have shown that the undecidable spectral gap family of \cite{Bausch_2020} is actually very sensitive to perturbations: for any $\varepsilon > 0$, a 1-local, rank 1 perturbation $p_\varepsilon = \varepsilon\ket{\bd}\bra{\bd}$ of the boundary symbol, switches the behavior of the spectral gap from undecidable to decidable, with a complexity that depends only on $\varepsilon$ and other constants.

The mechanism that makes this perturbation work is based on perturbing an exact energy cancellation within the Marker Hamiltonian, on which the energy amplification step (and therefore the encoding of the Halting problem) depends. This result sharpens the stability discussion already present in \cite{Bausch_2020}. The authors observe that perturbations of the classical diagonal coupling terms are safe, reasoning that the relevant penalty and bonus terms are of order one, so a perturbation of strength $\varepsilon$ only shifts them by a proportionally small amount. The perturbation in this work is also classical and diagonal, but its acting regime has an energy scale not of order one, but exponentially small in the segment length. This shows that depending on the precise nature of the perturbation, the undecidable behavior can drastically switch.

This finding is consistent with the state of the art on spectral gap stability. Existing rigorous stability theorems require strong structural assumptions, such as Local Topological Quantum Order (LTQO) or frustration-freeness \cite{Bravyi_2010, Michalakis_2013}. The construction of \cite{Bausch_2020} is neither frustration-free nor LTQO, and indeed, outside this regime, this work shows that even an elementary single-site classical perturbations can have a drastic qualitative effect on the low-energy spectrum.

The undecidability of the spectral gap has by now been extended along several directions beyond the original 1D and 2D constructions \cite{Bausch_2020, Cubitt2015, CPW22}: phase diagrams themselves can be uncomputable \cite{Bausch_2021}, approximating critical points of quantum phase transitions was shown to be of high complexity \cite{Watson_2021}, uncomputability results were also extended to renormalization group flows \cite{Watson_2022}, and imposing additional constraints such as symmetry does not in general remove undecidability \cite{CCL24}. These constructions naturally raise the question of how physically robust the underlying mechanisms are: is the corresponding undecidable or uncomputable behavior stable when faced with small perturbations of the interactions, or, as we find for the 1-dimensional spectral gap, could it drastically change?

Additionally, a line of recent work revolves around another related notion of the gap: the bulk gap. In a recent work \cite{Xu_2026}, it was shown that the bulk spectral gap, defined via the KMS condition on infinite-volume states rather than as a limit of finite-volume open-boundary spectra, is semi-decidable, via a convergent hierarchy of certified upper bounds obtained from semidefinite programming. This notion of gap is distinct, though not contradictory, from the finite-volume notion underlying \cite{Bausch_2020} and the present work, as they formalize different physical questions. The finite-volume notion is a property of a specific family of Hamiltonians with open boundary conditions, and the undecidable construction relies essentially on boundary-induced structure (the marker symbol, the segmentation of the chain, and the Gottesman-Irani boundary trick all exploit the mismatch between one and two-body terms that exists only because the chain has open ends). The bulk gap, by contrast, is defined intrinsically on the translation-invariant infinite-volume dynamics, with no reference to any boundary condition. It is known that finite-volume spectra with boundary conditions need not reflect the bulk gap: a system can be gapped in the bulk yet exhibit gapless edge states once a boundary is introduced \cite{Bachmann_2015}. In a related fashion, a hierarchy of certified lower bounds for the spectral gap of frustration-free systems was developed \cite{Rai_2026}, complementing the complete decidability result for frustration-free nearest-neighbor qubit chains \cite{10.1063/1.4922508}. Together, these developments suggest that undecidable behavior can be considerably sensitive, although the extent to which this phenomenon is generic remains an open question.

Finally, for the 2D result \cite{Cubitt2015, CPW22}, whether an analogous instability holds is unclear, since energy amplification there is driven by the very rigid, self-similar structure of the Robinson tiling \cite{robinson1971undecidability} rather than a single fine-tuned boundary term, and it is not thought that any comparably simple perturbation could disrupt it. Another question is the high local dimension of these constructions: is there a local-dimension threshold below which the spectral gap problem is necessarily decidable? A positive result in this direction is the frustration-free, nearest-neighbor qubit chain case of \cite{10.1063/1.4922508}, but which is specific to 1D.

%% file: A_Appendix.tex
\section{Turing machines}
\label{sec:turing-machines}

\begin{definition}\label{def:turing_machine}
  A \textit{(deterministic) Turing Machine} (TM) is defined by a triplet $(\Sigma, Q, \delta)$ where $\Sigma$ is a finite alphabet with an identified blank symbol $\#$, $Q$ is a finite set of states with an identified initial state $q_0$ and final state $q_f\not = q_0$, and $\delta$ is a transition function
      \begin{equation}\label{eq:classical_transition_function}
        \delta: Q\times \Sigma \rightarrow \Sigma\times Q\times \{L,R\}.
      \end{equation}
  The TM has a two-way infinite tape of cells indexed by $\mathbb{Z}$ and a single read/write tape head that moves along the tape. A configuration of the TM is a complete description of the contents of the tape, the location of the tape head and the state $q\in Q$ of the finite control. At any time, only a finite number of the tape cells may contain non-blank symbols.
  
  For any configuration $c$ of the TM, the successor configuration $c'$ is defined by applying the transition function to the current state and the symbol scanned by the head, replacing them by those specified in the transition function and moving the head left (L) or right (R) according to $\delta$.

  By convention, the initial configuration satisfies the following conditions: the head is in cell $0$, called the \textit{starting cell}, and the machine is in state $q_0$. We say that an initial configuration has input $x\in (\Sigma\setminus\#)^*$ if $x$ is written on the tape in positions $0,1,2,\dots$ and all other tape cells are blank. The TM halts on input $x$ if it eventually enters the final state $q_f$. The number of steps a TM takes to halt on input $x$ is its running time on input $x$. If a TM halts, then its output is the string in $\Sigma^*$ consisting of those tape contents from the leftmost non-blank symbol to the rightmost non-blank symbol, or the empty string if the entire tape is blank. A TM is called \textit{reversible} if each configuration has at most one predecessor.
\end{definition}

\begin{definition}
    An \textit{Universal Turing Machine} (UTM) is a Turing Machine capable of simulating any other Turing Machine. That is, $UTM(TM, n) = TM(n)$ for every Turing Machine TM and input $n$.
\end{definition}

\begin{definition}
  Call $\tilde{\mathbb{C}}$ to the set of $\alpha \in \mathbb{C}$ such that there is a deterministic algorithm that computes the real and imaginary parts of $\alpha$ to within $2^{-n}$ in time polynomial in $n$. A \textit{Quantum Turing Machine} (QTM) is defined by a triplet $(\Sigma, Q, \delta)$ where $\Sigma$ is a finite alphabet with an identified blank symbol $\#$, $Q$ is a finite set of states with an identified initial state $q_0$ and final state $q_f\neq q_0$, and a quantum transition function
  \begin{equation}\label{eq:quantum_transition_function}
    \delta: Q\times \Sigma \rightarrow \tilde{\mathbb{C}}^{\Sigma\times Q\times \{L,R\}}.
  \end{equation}
  The QTM has a two-way infinite tape of cells indexed by $\mathbb{Z}$ and a single read/write tape head that moves along the tape. We define configurations, initial configurations and final configurations exactly as for deterministic TMs.

  Let $\mathcal{S}$ be the inner-product space of finite complex linear combinations of configurations of the QTM, which we call $M$, with the Euclidean norm. We call each element $\phi\in \mathcal{S}$ a superposition of $M$. The QTM $M$ defines a linear operator $U_M: \mathcal{S}\rightarrow\mathcal{S}$, called the time evolution operator of $M$, as follows: if $M$ starts in configuration $c$ with current state $p$ and scanned symbol $\sigma$, then after one step $M$ will be in superposition of configurations $\psi=\sum_i\alpha_ic_i$, where each nonzero $\alpha_i$ corresponds to the amplitude $\delta(p,\sigma,\tau,q,d)$ of $\ket{\tau}\ket{q}\ket{d}$ in the transition $\delta(p,\sigma)$ and $c_i$ is the new configuration obtained by writing $\tau$, changing the internal state to $q$ and moving the head in the direction of $d$. Extending this map to the entire $\mathcal{S}$ through linearity gives the linear time evolution operator $U_M$.
\end{definition}

\begin{definition}
    Equivalently to the classical case, we say that a \textit{Universal Quantum Turing Machine} (UQTM) is a Quantum Turing Machine capable of simulating any other Quantum Turing Machine. That is, $UQTM(QMT, n) = QTM(n)$ for every Quantum Turing Machine QTM and input $n$.
\end{definition}